\documentclass[preprint,nocomments,norevision]{ptephy_om}
\preprintnumber{RIKEN-iTHEMS-Report-26}
\revisionnum{1}

\newcommand{\parity}{\mathsf P}
\newcommand{\varEP}[1]{\vsub{#1}{EP}}
\newcommand{\Laurent}{\mathcal L}

\newtheorem{theorem}{Theorem}[section]
\newtheorem{proposition}{Proposition}[section]

\usepackage{hyperref}
\usepackage{orcidlink}

\title{Riesz--Laurent representation of black-hole scattering and sourced response at exceptional points}

\author[1]{Okuto Morikawa \orcidlink{0000-0002-0044-4491}}
\affil[1]{Center for Interdisciplinary Theoretical and Mathematical Sciences (iTHEMS),
RIKEN, Wako 351-0198, Japan
\email{okuto.morikawa@riken.jp}
}

\author[2]{Shoya Ogawa \orcidlink{0000-0003-0900-2486}}
\affil[2]{Department of Physics, Kyushu University, 744 Motooka, Nishi-ku,
Fukuoka 819-0395, Japan
\email{ogawa.shoya.615@m.kyushu-u.ac.jp}
}

\author[3]{Takuya Hirose \orcidlink{0000-0003-0962-8884}}
\affil[3]{Faculty of Science and Engineering, Kyushu Sangyo University,
Fukuoka 813-8503, Japan
\email{t.hirose@ip.kyusan-u.ac.jp}
}

\begin{document}
\begin{abstract}
At a black-hole exceptional point (EP), two quasinormal modes coalesce
and their separate residues become ill-conditioned.  Rather than postulating
a near-degenerate modal fit, we derive the response constructively from the
complex-scaled Regge--Wheeler--Zerilli resolvent, treating the modes as one
isolated rank-two Riesz cluster.  Its zeroth and first contour moments
determine an exact pair resolvent on both sides of, and at, the EP, without
labeling the individual modes or constructing a normalized Jordan chain.  At
a second-order EP, these moments determine the simple- and double-pole Laurent
operators.  Although the modal decomposition is singular, fixed-real-frequency
transmission and the greybody factor remain real-analytic through the EP,
provided that the cluster remains isolated, the complementary resolvent is
regular, and no pole reaches the physical axis.  Source--observer matrix
elements of the Laurent operators define finite, normalization-independent
amplitudes and fix both the constant and linear-in-time terms in the causal
ringdown.  Their equality with the coefficients from the Jost double-zero
expansion shows that they are operator-defined coefficients of the specified
physical response, rather than fitting parameters.  Thus two cluster moments
provide mode-label-free data from which both scattering and driven responses
follow.
\end{abstract}
\maketitle

\tableofcontents

\section{Introduction}
\label{sec:introduction}

\subsection{Resolvent viewpoint for black-hole scattering}
Linear perturbations of a stationary black hole constitute an open scattering
problem.  After decomposition into angular modes and Fourier transformation in
time, scalar, electromagnetic, and gravitational perturbations are frequently
reduced to a one-dimensional equation,
\begin{align}
    \left(z-H_{\ell}\right)\psi_{\ell\omega}(x)
    &=
    S_{\ell\omega}(x),
    \label{eq:intro_driven_equation}
    \\
    H_{\ell}
    &=
    -\frac{\rmd^2}{\rmd x^2}+V_{\ell}(x),
    \qquad
    z=\omega^2,
    \label{eq:intro_radial_operator}
\end{align}
where $x=r_*$ is the tortoise coordinate\footnote{%
For the automated construction of tortoise-coordinate maps for static,
spherically symmetric black-hole metrics, see the Julia package
\texttt{AutoTortoise.jl}~\cite{Morikawa:2026AutoTortoise}.
The package identifies the relevant horizons, constructs the corresponding
tortoise map, and provides a numerical inverse map $r=r(r_*)$, which is
useful for expressing black-hole effective potentials directly as functions
of the tortoise coordinate.
}, $V_{\ell}$ is an effective radial
potential, and $S_{\ell\omega}$ is a possible external source.  The object that
simultaneously controls scattering and driven response is the outgoing
resolvent,
\begin{equation}
    R_{\ell}(z)
    =
    \left(z-H_{\ell}\right)^{-1},
    \qquad
    \psi_{\ell\omega}
    =
    R_{\ell}(z)S_{\ell\omega},
    \label{eq:intro_resolvent}
\end{equation}
understood as the retarded boundary value on the physical sheet and as its
analytic continuation when complex frequencies are considered.

Equation~\eqref{eq:intro_resolvent} may be read as an input--output map.
The operator $H_{\ell}$ together with the radiation conditions specifies the
black-hole scattering problem, the source $S_{\ell\omega}$ specifies how that
problem is driven, and a subsequent matrix element specifies how the response
is extracted.  The pole locations are properties of the first ingredient,
whereas an observed amplitude depends on all three.  This separation is useful
throughout the paper because changing the source or observation channel does
not move the QNM spectrum, although it can strongly enhance or suppress the
contribution of a given spectral sector.

Quasinormal modes (QNMs) are poles of this analytically continued resolvent, or
equivalently of the corresponding outgoing Green function
\cite{Leaver:1986gd,Berti:2009kk,Konoplya:2011qq}.  Their frequencies describe
the characteristic oscillation and decay scales of the black hole.  The full
resolvent contains more information than the pole locations, however.  In an
asymptotically flat spacetime its non-pole sector is required for the prompt
response and the branch-cut contribution that produces late-time tails
\cite{Ching:1994bd,Casals:2013mpa}.  On the real-frequency axis, the same
operator determines the reflection and transmission amplitudes.  For a wave
incident from spatial infinity, we use the convention
\begin{align}
    \psi_{\ell\omega}(x)
    &\sim
    \rme^{-\rmi\omega x}
    +\mathcal R_{\ell}(\omega)\rme^{+\rmi\omega x},
    &&x\rightarrow+\infty,
    \label{eq:intro_scattering_infinity}
    \\
    \psi_{\ell\omega}(x)
    &\sim
    \mathcal T_{\ell}(\omega)\rme^{-\rmi\omega x},
    &&x\rightarrow-\infty.
    \label{eq:intro_scattering_horizon}
\end{align}
For the real Schwarzschild potential and the usual unit-flux normalization,
the greybody factor is
\begin{equation}
    \Gamma_{\ell}(\omega)
    =
    \frac{\vsub{\mathcal F}{H}}{\vsub{\mathcal F}{in}}
    =
    \left|\mathcal T_{\ell}(\omega)\right|^2.
    \label{eq:intro_greybody}
\end{equation}
For a recent proposal to model ringdown spectra directly through
greybody factors, see Ref.~\cite{Oshita:2023cjz}.
The QNM spectrum, the continuum contribution, and the greybody factor are thus
different aspects of the same radial scattering problem.

The resolvent is equally central when an inhomogeneous source is present.  A
localized impulse, an infalling compact object, or an extended matter
distribution produces a field through $R_{\ell}S_{\ell\omega}$.  Near a simple
QNM pole, the residue factorizes into a source overlap, a propagation factor,
and a pole denominator.  Consequently, a QNM frequency is a property of the
operator and its radiation conditions, whereas its observable amplitude also
depends on how the system is driven and where the response is measured.
Green-function studies of particle-driven perturbations have made this source
and history dependence explicit \cite{DeAmicis:2025dynamical}.  A framework
that retains the complete resolvent can therefore treat QNM excitation without
discarding the accompanying nonresonant response.

A number of methods accurately address particular parts of this structure.
Continued fractions and analytic matching are highly effective for QNM
frequencies~\cite{Leaver:1985ax,Leaver:1986gd}, while direct Jost integration and time-domain evolution provide
scattering amplitudes and retarded waveforms.  Hyperboloidal formulations
regularize the event horizon and future null infinity geometrically~\cite{Ansorg:2016ztf,Jaramillo:2020tuu} and, when
combined with Keldysh theory, lead to resonant expansions of a non-self-adjoint
time-evolution generator
\cite{PanossoMacedo:2024nkw,Besson:2024keldysh}. See also Refs.~\cite{Miyachi:2025ptm,Pombo:2025urp}.  The present work takes a
complementary stationary viewpoint: the primary spectral object is the
frequency-domain two-channel scattering resolvent itself.  This choice is
motivated by the fact that its kernel and channel matrix elements give access
to transmission and source response in addition to resonant frequencies.

The complex scaling method (CSM) provides a direct spectral representation of
this resolvent.  An analytic deformation of the asymptotic radial contour maps
outgoing waves to square-integrable functions and rotates the continuous
spectrum away from the physical axis.  Resonances exposed between the physical
axis and the rotated cut become isolated eigenvalues of a non-Hermitian
operator \cite{Aguilar:1971ve,Balslev:1971vb,Simon1979ECS,Moiseyev:1998gjp,
Myo:2014ypa}.  In an extended biorthogonal resolution, the deformed resolvent
has the schematic form
\begin{align}
    R_{\ell,\theta}(z)
    & =
    \sum_{n\in\mathrm{res}}
    \frac{
        \ket{\psi_{n,\theta}^{R}}
        \bra{\psi_{n,\theta}^{L}}
    }{z-E_n}
    +
    \int_{L_{\theta}}\rmd\lambda\,
    \frac{
        \ket{\psi_{\lambda,\theta}^{R}}
        \bra{\psi_{\lambda,\theta}^{L}}
    }{z-\lambda},
    \label{eq:intro_spectral_resolvent}
\end{align}
where $L_{\theta}$ denotes the rotated continuum.  A finite $L^2$ basis
replaces the continuum integral by a set of pseudostates.  The pole and
continuum sectors can then be evaluated from the same matrix inverse rather
than by combining unrelated numerical constructions.

Our previous work used complex scaling to identify QNMs of Schwarzschild and
Reissner--Nordstr\"om black holes as isolated complex eigenvalues
\cite{Ogawa:2026veu}.  The method was subsequently extended to the continuum
level density, which supplies trace-level information about the spectral shift
and the total scattering phase \cite{Ogawa:2026opj}.  A trace observable does
not, in general, determine an off-diagonal channel amplitude or its modulus.
In particular, reconstructing $\Gamma_{\ell}=|\mathcal T_{\ell}|^2$ requires
matrix elements of the Green kernel rather than only the eigenvalue locations
or the resolvent trace.  This motivates the transition from a complex-scaled
QNM calculation to a complex-scaled scattering-resolvent calculation.

\subsection{Exceptional points and the failure of modal coordinates}
A second motivation is provided by exceptional points (EPs), at which two QNM
eigenvalues and their eigenvectors coalesce~\cite{Moiseyev:2011,Rotter:2007zng,Heiss:2012dx}. A concrete black-hole realization was identified in the massive-scalar
QNM spectrum of near-extremal Kerr black holes, where the longest-lived
mode and the first overtone coalesce and exhibit the characteristic
hysteretic mode exchange under adiabatic encircling
\cite{Cavalcante:2024swt}.
A companion near-extremal analysis mapped the surrounding QNM spectrum
and provided further numerical evidence for the same EP
\cite{Cavalcante:2024kmy}. Close to an EP, assigning a
separate excitation amplitude to each of the two modes becomes ill-conditioned,
although their combined response remains finite.  Recent studies have related
avoided crossings, resonant enhancement, and polynomially modulated ringdown
to this non-Hermitian structure
\cite{Motohashi:2024fwt,Yang:2025dbn,PanossoMacedo:2025xnf}.  The resolvent is
the natural object at the coalescence itself: a second-order EP is represented
by a double pole and a corresponding Jordan chain rather than by two ordinary
simple-pole projectors.

Near the EP, a signal may be represented phenomenologically by two nearly
degenerate damped exponentials or, at the coalescence, by a term of the form
$(a+bt)\rme^{-\rmi\varEP{\omega}t}$.  Such ans\"atze are useful for waveform
analysis, while relating their fitted coefficients to the underlying spectral
structure generally requires additional spectral input.  The fitted
coefficients combine properties of the operator with the source, the
observation channel, and the fitting prescription.  Our question precedes
such modeling.  We ask which quantities are defined constructively by the
non-Hermitian operator and its resolvent, and how a physical waveform follows
from them.  In this viewpoint, the polynomially modulated ringdown is a
consequence of the Laurent structure of the resolvent, rather than an assumed
template.

Complementary semi-analytic work in the Nariai limit has analyzed the
excitation of nearly double-pole QNMs, including destructive interference
and transient linear-in-time growth near exceptional lines
\cite{Nakamoto:2026lyo}.
Recent time-domain calculations for scalar perturbations of hairy black holes
have shown directly that an EP ansatz containing a term linear in time can
describe the coalescent ringdown more robustly than a sum of independent
damped modes~\cite{Cheng:2026ep}.
These results demonstrate the physical relevance of the EP waveform form.
The complementary question addressed here is how the coefficients of a
specified source--observer response are obtained directly as matrix
elements of finite Laurent operators of the full resolvent, without
fitting a waveform or assigning separately normalized amplitudes to the
coalescing QNMs.

Riesz projections and nondivergent representations at an EP are established
tools of non-Hermitian spectral theory
\cite{Kato:1976,Hashimoto:2014ep}.\footnote{%
Riesz projections provide a basis-independent construction of isolated
spectral subspaces~\cite{Kato:1976}.  A complementary nondivergent
description was developed by Hashimoto \textit{et al.\/}~\cite{Hashimoto:2014ep}, who replaced the singular eigenvector expansion
near an EP by an extended pseudo-eigenstate basis that approaches a Jordan
chain continuously.  Once the rank-two Riesz subspace has been selected,
their generalized Jordan representation and the present pair of moments
$\vsub{P}{pair}^{\theta}$ and
$\vsub{\mathcal M}{pair}^{\theta}$ describe the same restricted operator.
The distinction is that the Riesz moments construct both the invariant
subspace and the operator acting on it directly from the full black-hole
resolvent, without choosing an extended pseudo-eigenstate basis.}  Our contribution is not to reintroduce
these tools in isolation, but to use the first two contour moments to construct
the black-hole pair resolvent and then derive its scattering and
source--observer consequences within one stationary framework.

\subsection{Riesz-cluster construction and main results}
It is helpful to keep the roles of the variables distinct.  The complex number
$z$ is the spectral argument at which the resolvent is evaluated.  The real
parameters $q$ change the operator and are used to move through the EP.  The
angle $\theta$ changes the complex-scaled representation but not the underlying
physical matrix element in the exact theory.  Finally, the source $f$ and the
observation functional $g$ select an input--output channel without changing the
operator spectrum.  Thus a pole in $z$, a branch structure in $q$, and a large
or small channel response are related but logically different statements.

The need to avoid individual mode labels is not merely numerical.
In the Kerr EP cascade, the overtone ordering at any fixed parameter point
is unambiguous, but adiabatic continuation around the EPs can permute the
modes, so that a mode identity continued through parameter space is
path dependent
\cite{Cavalcante:2025abr}.
For the local problem considered here, this motivates treating the selected
resonant pair as one invariant spectral cluster rather than choosing a
globally continued label for either member.

The central object is therefore not either member of the coalescing QNM pair
separately, but the isolated rank-two root subspace containing both modes.
Let $q$ denote the operator parameters, and let
$\vsub{P}{pair}^{\theta}(q)$ and $\vsub{\mathcal M}{pair}^{\theta}(q)$ be the zeroth
and first Riesz moments of the complex-scaled resolvent around this cluster.
They determine
\begin{align}
    \vsub{E}{c}(q)
    &=
    \frac{1}{2}\Tr\vsub{\mathcal M}{pair}^{\theta}(q),
    \\
    \vsub{K}{pair}^{\theta}(q)
    &=
    \vsub{\mathcal M}{pair}^{\theta}(q)
    -\vsub{E}{c}(q)\vsub{P}{pair}^{\theta}(q),
    \\
    \Delta(q)^2
    &=
    \frac{1}{2}\Tr
    \left[
        \vsub{K}{pair}^{\theta}(q)^2
    \right].
\end{align}
Our principal analytic result is the exact identity
\begin{equation}
    \vsub{R}{pair}^{\theta}(z;q)
    =
    \frac{
        [z-\vsub{E}{c}(q)]
        \vsub{P}{pair}^{\theta}(q)
        +\vsub{K}{pair}^{\theta}(q)
    }{
        [z-\vsub{E}{c}(q)]^2
        -\Delta(q)^2
    }.
    \label{eq:intro_exact_pair_resolvent}
\end{equation}
The individual eigenvalues and projectors may exchange sheets or become
ill-conditioned, whereas the quantities on the right-hand side are
single-valued as long as the contour continues to isolate the same cluster.
At a second-order EP, $\vsub{P}{pair}^{\theta}$ and
$\vsub{K}{pair}^{\theta}$ become the simple- and double-pole Laurent
operators, respectively.  Thus the regular data are the invariant subspace
and the restriction of the operator to it, rather than amplitudes assigned to
two singularly defined basis vectors.

This operator identity has two physical consequences.  First, its scattering
channel matrix elements give an exact pair/rest decomposition of the
transmission amplitude.  Although the QNM decomposition is singular at the
EP, the fixed-real-frequency transmission amplitude and greybody factor remain
real-analytic, provided that no pole reaches the physical axis and the
complementary resolvent remains regular.  Second, matrix elements between a
prescribed source and observation functional define finite generalized
amplitudes at the coalescence.  Their inverse Fourier transform fixes both the
constant and linear-in-time terms multiplying the degenerate ringdown
frequency.  We show that these coefficients agree with the double-zero
expansion of the Jost Wronskian.  They are therefore spectral invariants of a
specified source--observer channel, not free parameters introduced by a
waveform fit.  For dilation-analytic states, the same quantities are
independent of the complex-scaling angle in the exact theory.

We deliberately separate this analytic construction from its numerical
realization.  A finite-basis formulation is retained because it provides a
direct implementation of the contour moments without resolving nearly
parallel eigenvectors, and the two-parameter Gaussian deformation of the
Regge--Wheeler potential in Ref.~\cite{Yang:2025dbn} supplies a concrete
black-hole setting.  The deformation changes the spectral operator and is
distinct from the inhomogeneous source, so the creation of the EP and the
coupling of a chosen channel to its double-pole sector can be addressed
separately.  Numerical calculations can test the contour construction and
quantify the response, but they do not define the cluster invariants derived
here.  Their quantitative implementation is left to subsequent work.

The remainder of the paper is organized as follows.  In
Sec.~\ref{sec:scattering_setup}, we define the sourced Schwarzschild
perturbation problem and its outgoing Green function.  Section~\ref{sec:complex_scaling}
constructs the global and exterior complex-scaled representations and their
finite-basis realization.  In Sec.~\ref{sec:real_axis_ep}, we derive the exact
rank-two Riesz-cluster formula and apply it to pair-resolved
real-frequency scattering.  Section~\ref{sec:source_response} introduces the
localized impulse and its QNM and continuum response.  The
Gaussian-deformed Regge--Wheeler model, the Jordan interpretation of the
cluster moments, the EP Laurent expansion, and the generalized source
amplitudes are discussed in Sec.~\ref{sec:exceptional_points}.
Section~\ref{sec:conclusion} summarizes the analytic framework and its scope.
Appendix~\ref{app:numerical_realization} gives a concrete numerical workflow
for constructing the two contour moments with the accompanying Julia tools.

\section{Scattering setup}
\label{sec:scattering_setup}

\subsection{Regge--Wheeler--Zerilli equations with a source}
\label{sec:rwz_source}

We consider linear perturbations of a Schwarzschild black hole of mass
$M$.  The background metric is
\begin{equation}
    \rmd s^2
    =
    -F(r)\,\rmd t^2
    +\frac{\rmd r^2}{F(r)}
    +r^2\rmd\Omega^2,
    \qquad
    F(r)=1-\frac{2M}{r}.
\end{equation}
The tortoise coordinate $x:=r_*$ is defined by
\begin{equation}
    \frac{\rmd x}{\rmd r}
    =
    \frac{1}{F(r)},
    \qquad
    x
    =
    r+2M\ln\left(\frac{r}{2M}-1\right).
    \label{eq:tortoise_coordinate}
\end{equation}
Thus, the event horizon and spatial infinity correspond to
$x\rightarrow-\infty$ and $x\rightarrow+\infty$, respectively.

For each radiative multipole $\ell\geq2$, the gauge-invariant
Zerilli--Moncrief and Cunningham--Price--Moncrief variables satisfy a
one-dimensional wave equation.  We denote the parity by
$\parity\in\{\mathrm{odd},\mathrm{even}\}$ and write
\begin{equation}
    \left[
        -\frac{\partial^2}{\partial t^2}
        +\frac{\partial^2}{\partial x^2}
        -V_{\ell}^{\parity}(r)
    \right]
    \psi_{\ell m}^{\parity}(t,x)
    =
    S_{\ell m}^{\parity}(t,x).
    \label{eq:rwz_time_domain}
\end{equation}
For gravitational perturbations, the odd-parity Regge--Wheeler
potential \cite{Regge:1957td} is
\begin{equation}
    \vsup{V}{odd}_{\ell}(r)
    =
    F(r)
    \left[
        \frac{\ell(\ell+1)}{r^2}
        -\frac{6M}{r^3}
    \right],
    \label{eq:rw_potential}
\end{equation}
whereas the even-parity Zerilli potential \cite{Zerilli:1970se} is
\begin{align}
    \vsup{V}{even}_{\ell}(r)
    &=
    \frac{2F(r)\,\mathcal Z_{\ell}(r)}
    {r^3\left(\lambda r+3M\right)^2},
    \label{eq:zerilli_potential}
    \\
    \mathcal Z_{\ell}(r)
    &=
    \lambda^2(\lambda+1)r^3
    +3\lambda^2Mr^2
    +9\lambda M^2r
    +9M^3,
    \\
    \lambda
    &=
    \frac{(\ell-1)(\ell+2)}{2}.
\end{align}
The source $S_{\ell m}^{\parity}$ is obtained by projecting the perturbed
stress-energy tensor onto the corresponding tensor harmonics.  It
vanishes for a vacuum perturbation, while a point particle generally
produces terms proportional to both a radial delta function and its
derivative.  In the present work we first use smooth localized sources,
so that the properties of the resolvent can be separated from the
additional distributional issues associated with a particle worldline.
The general relation between stress-energy projections, master
variables, and radiative observables is reviewed in
Ref.~\cite{Martel:2005ir}.

Once the master variables are known, the asymptotic metric perturbation
and the associated gravitational-wave flux can be reconstructed using the
standard Regge--Wheeler--Zerilli formulas; see, e.g.,
Ref.~\cite{Martel:2005ir}.\footnote{%
In the
Zerilli--Moncrief and Cunningham--Price--Moncrief normalization used by
Ref.~\cite{Martel:2005ir}, one convenient convention is
\begin{align}
    h_+-\rmi h_\times
    & =
    \frac{1}{2r}
    \sum_{\ell, m}
    \sqrt{\frac{(\ell+2)!}{(\ell-2)!}}
    \left(
        \vsup{\psi}{even}_{\ell m}
        +\rmi\vsup{\psi}{odd}_{\ell m}
    \right)
    {}_{-2}Y_{\ell m}
    +O(r^{-2}),
    \label{eq:rwz_strain}
\end{align}
and the time-averaged energy flux at future null infinity is
\begin{align}
    \Braket{\frac{\rmd E}{\rmd u}}
    &=
    \frac{1}{64\pi}
    \sum_{\ell, m}
    \frac{(\ell+2)!}{(\ell-2)!}
    \Braket{
        \left|\partial_u\vsup{\psi}{even}_{\ell m}\right|^2
        +
        \left|\partial_u\vsup{\psi}{odd}_{\ell m}\right|^2
    }.
    \label{eq:rwz_energy_flux}
\end{align}
Here ${}_{s}Y_{\ell m}$ denotes a spin-weighted spherical harmonic,
$u=t-x$ is retarded time, and the master variables are evaluated at
future null infinity.
}   In the present work, however, we work directly
with the master variables and focus on their resolvent, scattering, and
source--response structure.  The reconstruction of gravitational-wave strain and flux is left for future work.

\subsection{Frequency-domain outgoing resolvent}
\label{sec:outgoing_resolvent}

Equation~\eqref{eq:rwz_time_domain} separates into independent
$(\ell,m,\parity)$ channels.  We therefore suppress $m$ and $\parity$ when no
confusion can arise and define
\begin{equation}
    H_{\ell}
    =
    -\frac{\rmd^2}{\rmd x^2}
    +V_{\ell}(x).
    \label{eq:radial_hamiltonian}
\end{equation}
Our first source model is an instantaneous impulse,
\begin{equation}
    S_{\ell}(t,x)
    =
    \delta(t)f_{\ell}(x),
    \label{eq:impulsive_source}
\end{equation}
where $f_{\ell}$ is a localized spatial profile.  With the Fourier
convention
\begin{align}
    \widetilde{\psi}_{\ell}(\omega,x)
    &=
    \int_{-\infty}^{\infty}
    \rmd t\,
    \rme^{\rmi\omega t}
    \psi_{\ell}(t,x),
    \\
    \psi_{\ell}(t,x)
    &=
    \frac{1}{2\pi}
    \int_{\mathcal C}
    \rmd\omega\,
    \rme^{-\rmi\omega t}
    \widetilde{\psi}_{\ell}(\omega,x),
    \label{eq:fourier_pair}
\end{align}
where $\mathcal C$ is the retarded inversion contour, the radial
equation becomes
\begin{equation}
    \left(\omega^2-H_{\ell}\right)
    \widetilde{\psi}_{\ell}(\omega)
    =
    f_{\ell}.
    \label{eq:frequency_domain_driven}
\end{equation}
For $\im\omega>0$, the inverse is uniquely defined.  Its boundary value
on the real axis is the retarded, or outgoing, resolvent\footnote{%
This setup should be distinguished from the usual Lippmann--Schwinger
scattering problem, in which an incident solution of the free Hamiltonian
is specified independently and the total state contains both the incident
homogeneous wave and the scattered wave. In the present work, we instead
consider a retarded source-driven problem with no independently prescribed
incoming radiation. Once the source and the retarded radiation conditions
are fixed, the solution is uniquely determined by the full resolvent, and
no arbitrary homogeneous contribution should be added. In particular,
our impulsive source
$S_{\ell}(t,r_*)=\delta(t)f_{\ell}(r_*)$
may equivalently be regarded as specifying an instantaneous perturbation
of the initial data. This prescription does not exclude quasinormal modes:
their contributions arise from the poles of the analytically continued
resolvent and are excited by the projection of $f_{\ell}$ onto the
corresponding resonant subspace. Thus, a source-free ringdown can already
be described within linear perturbation theory through nontrivial initial
data. At second and higher orders, the same resolvent structure can be
used with effective sources generated by nonlinear combinations of
lower-order perturbations.
}
\begin{equation}
    R_{\ell}(\omega)
    =
    \lim_{\epsilon\downarrow0}
    \left[
        (\omega+\rmi\epsilon)^2-H_{\ell}
    \right]^{-1},
    \qquad
    \widetilde{\psi}_{\ell}(\omega)
    =
    R_{\ell}(\omega)f_{\ell}.
    \label{eq:retarded_resolvent}
\end{equation}
Equivalently, one may use the spectral variable
$z=(\omega+\rmi0)^2$ and write $R_{\ell}(z)=(z-H_{\ell})^{-1}$.
No independent homogeneous contribution is required once retarded
analyticity and the two radiation conditions have been imposed.

The integral kernel
\begin{equation}
    G_{\ell}(\omega;x,x')
    =
    \braket{x|R_{\ell}(\omega)|x'}
    \label{eq:green_kernel_definition}
\end{equation}
can be expressed in terms of two Jost solutions.  We choose
$\vsup{u}{H}_{\ell}$ to be purely ingoing at the future horizon and
$u_{\ell}^{\infty}$ to be purely outgoing at spatial infinity,
\begin{align}
    \vsup{u}{H}_{\ell}(\omega,x)
    &\sim
    \rme^{-\rmi\omega x},
    &x&\rightarrow-\infty,
    \\
    u_{\ell}^{\infty}(\omega,x)
    &\sim
    \rme^{+\rmi\omega x},
    &x&\rightarrow+\infty.
\end{align}
Then
\begin{equation}
    G_{\ell}(\omega;x,x')
    =
    \frac{
        \vsup{u}{H}_{\ell}(\omega,x_<)
        u_{\ell}^{\infty}(\omega,x_>)
    }{
        W_{\ell}(\omega)
    },
    \label{eq:jost_green_function}
\end{equation}
where $x_<:=\min(x,x')$, $x_>:=\max(x,x')$, and
\begin{equation}
    W_{\ell}(\omega)
    =
    \vsup{u}{H}_{\ell}
    \frac{\rmd u_{\ell}^{\infty}}{\rmd x}
    -
    \frac{\rmd \vsup{u}{H}_{\ell}}{\rmd x}
    u_{\ell}^{\infty}
\end{equation}
is independent of $x$.  At spatial infinity,
\begin{equation}
    \vsup{u}{H}_{\ell}
    \sim
    \vsup{A}{in}_{\ell}
    \rme^{-\rmi\omega x}
    +
    \vsup{A}{out}_{\ell}
    \rme^{+\rmi\omega x}.
    \label{eq:jost_asymptotic_coefficients}
\end{equation}
A unit-amplitude wave incident from infinity therefore has
\begin{equation}
    \mathcal T_{\ell}(\omega)
    =
    \frac{1}{\vsup{A}{in}_{\ell}(\omega)},
    \qquad
    \mathcal R_{\ell}(\omega)
    =
    \frac{\vsup{A}{out}_{\ell}(\omega)}
    {\vsup{A}{in}_{\ell}(\omega)}.
    \label{eq:scattering_coefficients_jost}
\end{equation}
The QNM frequencies $\omega_n$ satisfy
\begin{equation}
    \vsup{A}{in}_{\ell}(\omega_n)
    =
    0,
    \qquad
    W_{\ell}(\omega_n)
    =
    0.
\end{equation}
Thus the QNM frequencies,
real-frequency scattering amplitudes, and source response are all
encoded in the same Green function.

\section{Complex scaling}
\label{sec:complex_scaling}

\subsection{Global and exterior contour deformations}
\label{sec:complex_contour}

The purpose of complex scaling in the present work is to represent the
outgoing resolvent, rather than only to locate its poles.  We write $x=r_*$ and
replace the real radial axis by an analytic contour $\zeta_{\theta}(x)$ in the
complex plane.  The global dilation used in our previous QNM and
continuum-level-density calculations is
\begin{equation}
    \zeta_{\theta}(x)
    =
    x\rme^{\rmi\theta},
    \qquad
    0<\theta<\frac{\pi}{2}.
    \label{eq:global_contour}
\end{equation}
For a smooth dilation-analytic source, including the Gaussian profile used
below within its angular domain of analyticity, this transformation gives the
most direct numerical starting point.

For sources or observables that should remain on the physical real axis, one
may instead use exterior complex scaling (ECS).  A piecewise-linear ECS contour
is
\begin{equation}
    \zeta_{\theta}(x)
    =
    \begin{cases}
        -x_0+\left(x+x_0\right)\rme^{\rmi\theta},
        &x<-x_0,
        \\
        x,
        &|x|\leq x_0,
        \\
        x_0+\left(x-x_0\right)\rme^{\rmi\theta},
        &x>x_0,
    \end{cases}
    \label{eq:ecs_contour}
\end{equation}
where the undeformed interval is chosen to contain the principal potential
barrier, the source support, and any matching surfaces.  In a numerical
implementation the two corners may be smoothed.  The finite-basis formulas
below apply to either global scaling or ECS.

Let $U_{\theta}$ denote the non-unitary contour map, including the Jacobian
factor appropriate to the chosen representation.  The deformed operator is
\begin{equation}
    H_{\ell,\theta}
    =
    U_{\theta}H_{\ell}U_{\theta}^{-1}.
    \label{eq:complex_scaled_hamiltonian}
\end{equation}
For the global contour in Eq.~\eqref{eq:global_contour},
\begin{equation}
    H_{\ell,\theta}
    =
    -\rme^{-2\rmi\theta}
    \frac{\rmd^2}{\rmd x^2}
    +V_{\ell}\!\left(\zeta_{\theta}(x)\right),
    \label{eq:global_csm_operator}
\end{equation}
where the analytic continuation of $r=r(x)$, the inverse tortoise-coordinate relation, is taken on the same branch as in
the QNM calculation.  The contour must remain inside a domain where the
continued potential and the transformed source are single-valued and where no
singularity is crossed.

The outgoing radiation condition becomes square integrability along the
deformed contour.  At spatial infinity,
\begin{equation}
    \rme^{+\rmi\omega x}
    \longmapsto
    \rme^{+\rmi\omega\zeta_{\theta}(x)},
\end{equation}
which decays along a globally rotated ray when
\begin{equation}
    \im\!\left(\omega\rme^{\rmi\theta}\right)>0.
    \label{eq:csm_decay_condition}
\end{equation}
The ingoing horizon factor decays on the left ray under the corresponding rotation. QNM profiles that grow exponentially on the real axis can therefore be represented by $L^2$ eigenvectors of $H_{\ell,\theta}$. In the $z=\omega^2$ plane the continuum is rotated by approximately $2\theta$,
whereas exposed resonance eigenvalues are independent of $\theta$ in the exact theory. See Appendix~\ref{sec:abc} for the spectral mechanism underlying the Aguilar--Balslev--Combes complex-scaling theorem.  Residual $\theta$ dependence at finite basis size is used as a convergence diagnostic.

\subsection{Finite-basis resolvent}
\label{sec:finite_basis_resolvent}

We discretize $H_{\ell,\theta}$ in a nonorthogonal set of localized
basis functions.  A convenient choice is a polynomial--Gaussian family
\begin{equation}
    \phi_{i n}(x)
    =
    \mathcal N_{i n}
    \left(\sqrt{\alpha_i}\,x\right)^n
    \exp\left(-\alpha_i x^2\right),
    \qquad
    \text{$n=0$, $1$, \dots, $p_{\max}$},
    \label{eq:polynomial_gaussian_basis}
\end{equation}
where $\mathcal N_{i n}$ denotes a normalization constant and
$\alpha_i=r_i^{-2}$.  The length scales are distributed geometrically,
\begin{equation}
    r_i
    =
    r_{\min}a^{i-1},
    \qquad
    a
    =
    \left(\frac{r_{\max}}{r_{\min}}\right)^{1/(N_r-1)}.
    \label{eq:gaussian_ranges}
\end{equation}
The polynomial degree controls local shape, whereas the geometric set
of ranges resolves both the potential barrier and the extended tails.
This basis was used previously for pole and continuum-level-density
calculations, but here the primary numerical object is the resolvent
matrix rather than the eigenvalue list alone.

With a collective basis index $\gamma=(i,n)$, define
\begin{equation}
    \left(\mathbf H_{\theta}\right)_{\gamma\gamma'}
    =
    \braket{\phi_\gamma|H_{\ell,\theta}|\phi_{\gamma'}},
    \qquad
    \left(\mathbf N\right)_{\gamma\gamma'}
    =
    \braket{\phi_\gamma|\phi_{\gamma'}}.
\end{equation}
The right and left generalized eigenvectors satisfy
\begin{align}
    \mathbf H_{\theta}\mathbf c_{\nu}^{R}
    &=
    E_{\nu}^{\theta}
    \mathbf N\mathbf c_{\nu}^{R},
    \label{eq:generalized_right_eigenproblem}
    \\
    \left(\mathbf c_{\nu}^{L}\right)^{\dagger}
    \mathbf H_{\theta}
    &=
    E_{\nu}^{\theta}
    \left(\mathbf c_{\nu}^{L}\right)^{\dagger}
    \mathbf N,
    \label{eq:generalized_left_eigenproblem}
\end{align}
and are normalized according to
\begin{equation}
    \left(\mathbf c_{\mu}^{L}\right)^{\dagger}
    \mathbf N
    \mathbf c_{\nu}^{R}
    =
    \delta_{\mu\nu}.
    \label{eq:biorthogonal_normalization}
\end{equation}
For a complex-symmetric matrix representation, this relation may be
implemented with the associated $c$ product.  We retain left and right
notation because it also applies to more general discretizations.

Within the finite basis, the driven equation is solved directly by
\begin{equation}
    \mathbf c(z)
    =
    \left(z\mathbf N-\mathbf H_{\theta}\right)^{-1}
    \mathbf b,
    \qquad
    b_\gamma
    =
    \braket{\phi_\gamma|U_{\theta}|f_{\ell}}.
    \label{eq:matrix_driven_solution}
\end{equation}
This inverse maps a covector of Galerkin right-hand sides to the coefficient
vector.  The matrix representing the resolvent as an operator on coefficient
vectors is instead
\begin{equation}
    \mathbf G_{\theta}(z)
    :=
    \left(z\mathbf N-\mathbf H_{\theta}\right)^{-1}
    \mathbf N.
    \label{eq:coefficient_space_resolvent}
\end{equation}
With the normalization in Eq.~\eqref{eq:biorthogonal_normalization}, the two
matrices admit the expansions
\begin{equation}
    \left(z\mathbf N-\mathbf H_{\theta}\right)^{-1}
    =
    \sum_{\nu=1}^{N}
    \frac{
        \mathbf c_{\nu}^{R}
        \left(\mathbf c_{\nu}^{L}\right)^{\dagger}
    }{
        z-E_{\nu}^{\theta}
    },
    \label{eq:finite_spectral_resolvent}
\end{equation}
and
\begin{equation}
    \mathbf G_{\theta}(z)
    =
    \sum_{\nu=1}^{N}
    \frac{
        \mathbf c_{\nu}^{R}
        \left(\mathbf c_{\nu}^{L}\right)^{\dagger}
        \mathbf N
    }{
        z-E_{\nu}^{\theta}
    },
    \label{eq:finite_operator_resolvent}
\end{equation}
where $N=N_r\times(p_{\max}+1)$.
In the continuum limit, this sum approaches the extended completeness
relation containing bound states, exposed resonances, and an integral
along the rotated continuum.  At finite $N$, the latter is represented
by complex pseudostates aligned with the deformed cut
\cite{Myo:2014ypa}.

For $z$ in the initial resolvent domain, and thereafter by analytic
continuation between dilation-analytic states, the physical and
deformed resolvents are related by
\begin{equation}
    R_{\ell}(z)
    =
    U_{\theta}^{-1}
    R_{\ell,\theta}(z)
    U_{\theta},
    \qquad
    R_{\ell,\theta}(z)
    =
    \left(z-H_{\ell,\theta}\right)^{-1}.
    \label{eq:resolvent_similarity_relation}
\end{equation}

This relation clarifies the status of complex scaling in the present
construction.  It is not a physical deformation of the black-hole potential
and it does not replace the outgoing problem by a different bound-state
problem.  Rather, it is a representation of the same analytically continued
resolvent in which outgoing resonant states become square integrable and the
continuum is rotated away from them.  Consequently, $\theta$ is not an
observable parameter.  Any residual dependence on $\theta$ after basis
truncation measures numerical error or a failure to remain in a common
analyticity domain.  Likewise, the pseudostates that discretize the rotated
continuum are quadrature data for the non-pole response, not additional
physical QNMs.

For arbitrary source and observation states $f$ and $g$, the resulting
matrix element is
\begin{align}
    \braket{g|R_{\ell}(z)|f}
    &=
    \sum_{\nu}
    \frac{
        \braket{g|U_{\theta}^{-1}|\psi_{\nu,\theta}^{R}}
        \braket{\psi_{\nu,\theta}^{L}|U_{\theta}|f}
    }{
        z-E_{\nu}^{\theta}
    }.
    \label{eq:source_observer_factorization}
\end{align}
For global scaling, both the source and observation states in
Eq.~\eqref{eq:source_observer_factorization} must be transformed, and the
formula is restricted to dilation-analytic vectors.  The Gaussian source used
in Sec.~\ref{sec:source_response} satisfies this requirement for an admissible
range of $\theta$.

Equation~\eqref{eq:source_observer_factorization} should not be confused
with a term-by-term time evolution in the variable $E_{\nu}^{\theta}$.
The wave equation is second order in time, and the map
$z=\omega^2$ introduces two square-root sheets.  The retarded time-domain
field must therefore be reconstructed from the $\omega$-plane integral
in Eq.~\eqref{eq:fourier_pair}.  In particular, a simple QNM pole at
$\omega=\omega_n$ contributes
\begin{equation}
    \psi_{\ell,n}(t,x)
    =
    -\rmi\,\Theta(t)
    \rme^{-\rmi\omega_n t}
    \Res_{\omega=\omega_n}
    \widetilde{\psi}_{\ell}(\omega,x),
    \label{eq:qnm_residue_time_domain}
\end{equation}
with the symmetry-related pole at $-\omega_n^*$ included when a real field is reconstructed. Here $\Theta(t)$ is the Heaviside step function. The remaining contour contribution contains the prompt and continuum parts of the signal.

\section{Rank-two Riesz clusters and real-frequency scattering}
\label{sec:real_axis_ep}

Reconstructing a known Schwarzschild greybody curve from the Green function
would principally validate the channel-resolved resolvent rather than provide
an independent physical result.  Our purpose in this section is instead
analytic.  We first derive a representation of an isolated QNM pair that is
regular through a second-order exceptional point (EP), and then apply it to
the transmission amplitude.  This separates two statements that are sometimes
conflated: the modal decomposition is singular at an EP, whereas a physical
real-axis matrix element need not be.  The distinction is particularly
relevant because QNM spectra can be highly sensitive to a potential
deformation even when the associated greybody factors remain comparatively
stable~\cite{Xie:2025jbr}.
Related distinctions between QNM spectral instability and physical response
have been studied in Refs.~\cite{Jaramillo:2020tuu,Yang:2024vor}.

Two analyticity questions enter this section.  At fixed operator parameters
$q$, the continued resolvent is singular as a function of $z$ at a QNM pole,
and the EP produces a double pole at $z=\varEP{E}$.  By contrast, in a
real-frequency scattering experiment one fixes $z=z_{\omega}$ and varies the
operator parameters through $q=\varEP{q}$.  If $z_{\omega}$ remains away from
the continued pole, the resulting channel matrix element can be regular in
$q$ even though the individual QNM branches are not.  The regularity result
below concerns this second question and does not remove the first singularity.

\subsection{On-shell channel matrix elements}
\label{sec:on_shell_channels}

Let $q$ denote real parameters of a real radial potential; for the Gaussian
family introduced in Sec.~\ref{sec:gaussian_bump_ep},
$q=(\varepsilon,d)$ at fixed width $\sigma$.  For real $\omega>0$,
flux-normalized scattering from spatial infinity is defined by
\begin{align}
    \vsup{\psi}{sc}_{\ell\omega}(x;q)
    &\sim
    \rme^{-\rmi\omega x}
    +\mathcal R_{\ell}(\omega;q)
    \rme^{+\rmi\omega x},
    &x&\rightarrow+\infty,
    \\
    \vsup{\psi}{sc}_{\ell\omega}(x;q)
    &\sim
    \mathcal T_{\ell}(\omega;q)
    \rme^{-\rmi\omega x},
    &x&\rightarrow-\infty.
\end{align}
The greybody factor and the flux identity are
\begin{align}
    \Gamma_{\ell}(\omega;q)
    &=
    \left|\mathcal T_{\ell}(\omega;q)\right|^2,
    \label{eq:greybody_factor_parameterized}
    \\
    \left|\mathcal R_{\ell}(\omega;q)\right|^2
    +
    \left|\mathcal T_{\ell}(\omega;q)\right|^2
    &=
    1.
    \label{eq:greybody_unitarity}
\end{align}
The second relation holds for the real, nonrotating potentials considered
here and supplies a stringent numerical check.

To expose the operator structure, define the linear transmission functional
\begin{align}
    \mathfrak T_{\ell,\omega}[A]
    &:=
    2\rmi\omega
    \lim_{\substack{x\rightarrow-\infty\\x'\rightarrow+\infty}}
    \rme^{+\rmi\omega x}
    \rme^{-\rmi\omega x'}
    \braket{x|A|x'}.
    \label{eq:transmission_functional}
\end{align}
For the retarded boundary value
$z_{\omega}:=(\omega+\rmi0)^2$, the Jost representation in
Eq.~\eqref{eq:jost_green_function} gives
\begin{equation}
    \mathcal T_{\ell}(\omega;q)
    =
    \mathfrak T_{\ell,\omega}
    \left[R_{\ell}(z_{\omega};q)\right].
    \label{eq:transmission_from_green}
\end{equation}
For completeness, with
\begin{equation}
    G_{0}(\omega;x,x')
    =
    \frac{\rme^{\rmi\omega|x-x'|}}{2\rmi\omega},
    \label{eq:free_green_function}
\end{equation}
the reflection amplitude is
\begin{align}
    \mathcal R_{\ell}(\omega;q)
    &=
    2\rmi\omega
    \lim_{x,x'\rightarrow+\infty}
    \rme^{-\rmi\omega(x+x')}
    \left[
        G_{\ell}(\omega;x,x';q)
        -G_{0}(\omega;x,x')
    \right].
    \label{eq:reflection_from_green}
\end{align}
In an ECS calculation these channel functionals are implemented by matching
the Green function and its radial derivative at two surfaces in the
undeformed region.  The surfaces must lie outside the principal interaction
region, and their displacement provides an estimate of the residual
asymptotic error.

The need for a channel functional is important.  The continuum level density
contains only trace information,
\begin{equation}
    \Delta\rho(E)
    =
    -\frac{1}{\pi}
    \im\Tr
    \left[
        R(E+\rmi0)-R_0(E+\rmi0)
    \right],
    \label{eq:cld_definition_recap}
\end{equation}
and cannot determine the modulus of an off-diagonal transmission amplitude.
The role of Eqs.~\eqref{eq:transmission_from_green} and
\eqref{eq:reflection_from_green} is therefore to turn the same
complex-scaled resolvent into a physical channel observable.

\subsection{Two contour moments and the exact pair resolvent}
\label{sec:rank_two_riesz_cluster}

Assume that two QNM eigenvalues form an isolated algebraic cluster of total
multiplicity two and coalesce at $q=\varEP{q}$.  Let
$\vsub{\Gamma}{pair}$ be a positively oriented contour in the $z$ plane that
encloses this cluster and no other spectral point of the complex-scaled
operator.  The first two contour moments are
\begin{align}
    \vsub{P}{pair}^{\theta}(q)
    &:=
    \frac{1}{2\pi\rmi}
    \oint_{\vsub{\Gamma}{pair}}
    R_{\ell,\theta}(z;q)\,\rmd z,
    \label{eq:ep_riesz_projector}
    \\
    \vsub{\mathcal M}{pair}^{\theta}(q)
    &:=
    \frac{1}{2\pi\rmi}
    \oint_{\vsub{\Gamma}{pair}}
    zR_{\ell,\theta}(z;q)\,\rmd z
    =
    H_{\ell,\theta}(q)
    \vsub{P}{pair}^{\theta}(q).
    \label{eq:cluster_first_moment}
\end{align}
The first is the Riesz projector onto the root subspace; the second is the
restriction of the spectral operator to the same subspace, embedded in the
full space.  Because this subspace has dimension two, define its spectral
center and traceless part by
\begin{align}
    \vsub{E}{c}(q)
    &:=
    \frac{1}{2}
    \Tr\vsub{\mathcal M}{pair}^{\theta}(q),
    \label{eq:cluster_spectral_center}
    \\
    \vsub{K}{pair}^{\theta}(q)
    &:=
    \vsub{\mathcal M}{pair}^{\theta}(q)
    -\vsub{E}{c}(q)
    \vsub{P}{pair}^{\theta}(q).
    \label{eq:cluster_traceless_part}
\end{align}
The traces are finite-dimensional traces on the range of
$\vsub{P}{pair}^{\theta}$, equivalently the ordinary operator traces of these
finite-rank operators.  Finally, introduce the single-valued squared
half-splitting
\begin{equation}
    \Delta(q)^2
    :=
    \frac{1}{2}
    \Tr
    \left[
        \vsub{K}{pair}^{\theta}(q)^2
    \right].
    \label{eq:cluster_discriminant}
\end{equation}
Only $\Delta^2$, rather than a chosen square-root branch $\Delta$, is needed
below.

The information content of the two moments has a simple finite-dimensional
interpretation.  On the range of $\vsub{P}{pair}^{\theta}$, the projector acts
as the two-dimensional identity, while
$\vsub{\mathcal M}{pair}^{\theta}=H_{\ell,\theta}
\vsub{P}{pair}^{\theta}$ is the matrix of the restricted operator.  The
projector therefore identifies the invariant root space, and the first moment
specifies how the operator acts inside that space.  Taking the trace and
traceless part of this restricted matrix gives $\vsub{E}{c}$ and
$\vsub{K}{pair}^{\theta}$.  This is why the projector alone is insufficient
at an EP: it retains the two-dimensional space but not the nilpotent action
that produces the double pole.

\begin{proposition}[Exact rank-two cluster resolvent]
\label{prop:exact_rank_two_cluster}
The resolvent restricted to the isolated pair is
\begin{align}
    \vsub{R}{pair}^{\theta}(z;q)
    &:=
    R_{\ell,\theta}(z;q)
    \vsub{P}{pair}^{\theta}(q)
    \label{eq:ep_pair_resolvent}
    \\
    &=
    \frac{
        \left[z-\vsub{E}{c}(q)\right]
        \vsub{P}{pair}^{\theta}(q)
        +\vsub{K}{pair}^{\theta}(q)
    }{
        \left[z-\vsub{E}{c}(q)\right]^2
        -\Delta(q)^2
    }.
    \label{eq:exact_pair_resolvent}
\end{align}
It is determined completely by the two contour moments in
Eqs.~\eqref{eq:ep_riesz_projector} and
\eqref{eq:cluster_first_moment}.
\end{proposition}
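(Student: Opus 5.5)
The proof reduces to finite-dimensional linear algebra on the range of $\vsub{P}{pair}^{\theta}$; Cayley--Hamilton then does most of the work. Write $P:=\vsub{P}{pair}^{\theta}(q)$, $H:=H_{\ell,\theta}(q)$ and $V:=\operatorname{Ran}P$, which has dimension two.

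\emph{Step 1: standard Riesz facts.} Because $\vsub{\Gamma}{pair}$ isolates the cluster, I will invoke the standard Riesz calculus \cite{Kato:1976}:
\begin{itemize}
\item $P$ is a projector that commutes with $H$ and with $R_{\ell,\theta}(z)$ for every $z$ in the resolvent set;
\item $V$ is invariant under $H$;
\item $HP$ is bounded, since $V$ is finite dimensional.
\end{itemize}
The identity $\vsub{\mathcal M}{pair}^{\theta}=HP$ in Eq.~\eqref{eq:cluster_first_moment} follows by writing $zR=1+HR$ inside the contour integral; the integral of the constant term vanishes. Hence $R P$ is the inverse of $(z-H)|_V$ on $V$, extended by zero on $\operatorname{Ran}(1-P)$. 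Equivalently, $RP=P[(z-H)|_V]^{-1}P$.

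\emph{Step 2: Cayley--Hamilton on $V$.} Let $A:=H|_V$, a $2\times2$ operator. The traces in Eqs.~\eqref{eq:cluster_spectral_center} and \eqref{eq:cluster_discriminant} are traces of finite-rank operators supported on $V$, so $\Tr(HP)=\operatorname{tr}A$. Setting $B:=A-\vsub{E}{c}\,1_V$, we get $\operatorname{tr}B=0$, and $\vsub{K}{pair}^{\theta}$ is $B$ embedded via $P$. Cayley--Hamilton for a traceless $2\times2$ matrix gives
\[
B^2=-(\det B)\,1_V ,
\]
and taking traces yields $\det B=-\tfrac12\operatorname{tr}B^2$. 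Therefore $B^2=\Delta^2 1_V$ with $\Delta^2$ defined as in Eq.~\eqref{eq:cluster_discriminant}, that is, $K^2=\Delta^2P$.

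\emph{Step 3: inversion.} Put $w:=z-\vsub{E}{c}$. Then
\[
(w-B)(w+B)=w^2-B^2=(w^2-\Delta^2)1_V .
\]
Whenever $w^2\neq\Delta^2$, this gives $(z-A)^{-1}=(w+B)/(w^2-\Delta^2)$. Lifting to the full space with Step 1 yields Eq.~\eqref{eq:exact_pair_resolvent}, because $PKP=K$. The zeros of $w^2-\Delta^2$ are exactly the eigenvalues of $A$, namely the cluster points. So the identity holds on the whole resolvent set and extends as a meromorphic identity. Since the right-hand side involves only $P$, $\vsub{E}{c}$, $K$ and $\Delta^2$, all built from $P$ and $\vsub{\mathcal M}{pair}^{\theta}$, the claim of complete determination follows.

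\emph{Where care is needed.} No step is genuinely hard; the main care is in Step 1, justifying that $\Tr\vsub{\mathcal M}{pair}^{\theta}$ and $\Tr K^2$ are well defined and equal the $2\times2$ traces. I would handle this by noting that each of these operators has the form $PXP$ with finite rank. The trace of such an operator is basis independent and can be computed in a basis adapted to the splitting $V\oplus\operatorname{Ran}(1-P)$, which is not orthogonal in general. One should also remark that the argument never diagonalizes $A$. It therefore applies verbatim at $\Delta^2=0$, where $B$ is nilpotent and the formula reduces to $P/w+K/w^2$.
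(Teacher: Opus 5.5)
Your proposal is correct and follows essentially the same route as the paper. Cayley--Hamilton on the two-dimensional range of $\vsub{P}{pair}^{\theta}$ gives $K^2=\Delta^2P$, and the factorization $(aP-K)(aP+K)=(a^2-\Delta^2)P$ with $a=z-\vsub{E}{c}$ then inverts $z-H$ on that range, verbatim also at $\Delta^2=0$. The differences are only cosmetic: you invert the restricted $2\times2$ operator and lift back via $RP=P[(z-H)|_V]^{-1}P$, whereas the paper checks $(z-H)Q=Q(z-H)=P$ directly in the full space; you also justify $\vsub{\mathcal M}{pair}^{\theta}=HP$ from $zR=1+HR$ and obtain $\det B=-\frac{1}{2}\operatorname{tr}B^2$ by tracing the Cayley--Hamilton identity, where the paper asserts the former and cites the $2\times2$ determinant formula for the latter.
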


\begin{proof}
For compactness, suppress the parameter argument $q$ and write
\begin{align}
    H
    &:=
    H_{\ell,\theta}(q),
    \\
    P
    &:=
    \vsub{P}{pair}^{\theta}(q),
    \\
    \mathcal M
    &:=
    \vsub{\mathcal M}{pair}^{\theta}(q),
    \\
    E_c
    &:=
    \vsub{E}{c}(q),
    \\
    K
    &:=
    \vsub{K}{pair}^{\theta}(q).
\end{align}
By the holomorphic functional calculus, the Riesz projector satisfies
\begin{equation}
    P^2=P,
    \qquad
    HP=PH.
\end{equation}
Together with $\mathcal M=HP$ and $K=\mathcal M-E_cP$, this also implies
\begin{equation}
    P\mathcal M
    =
    \mathcal MP
    =
    \mathcal M,
    \qquad
    PK
    =
    KP
    =
    K.
\end{equation}
It follows that the two-dimensional subspace
\begin{equation}
    \mathcal X
    :=
    \operatorname{im}(P)
\end{equation}
is invariant under $H$.\footnote{%
We denote the image of a map $f$ by $\operatorname{im}(f)$.} On $\mathcal X$, the projector $P$ is the identity
$I_{\mathcal X}$, and $\mathcal M=HP$ restricts to the operator
$H|_{\mathcal X}$.  The definition of $K$ may therefore be written as
\begin{equation}
    K|_{\mathcal X}
    =
    H|_{\mathcal X}-E_c I_{\mathcal X}.
\end{equation}
Moreover, since $E_c=\frac{1}{2}\Tr\mathcal M$ and
$\dim\mathcal X=2$, its trace on $\mathcal X$ vanishes:
\begin{align}
    \Tr_{\mathcal X}
    \left(K|_{\mathcal X}\right)
    &=
    \Tr_{\mathcal X}
    \left(H|_{\mathcal X}\right)
    -E_c\Tr_{\mathcal X}I_{\mathcal X}
    \\
    &=
    2E_c-2E_c
    =0.
\end{align}

For any operator $B$ on a two-dimensional space, the
Cayley--Hamilton identity is
\begin{equation}
    B^2
    -\left(\Tr B\right)B
    +\left(\det B\right)I
    =0.
\end{equation}
Applying it to $B=K|_{\mathcal X}$ and using the vanishing trace gives
\begin{equation}
    \left(K|_{\mathcal X}\right)^2
    =
    -\det\left(K|_{\mathcal X}\right)I_{\mathcal X}.
\end{equation}
For a two-dimensional operator, its determinant can also be expressed as
\begin{equation}
    \det B
    =
    \frac{1}{2}
    \left[
        \left(\Tr B\right)^2
        -\Tr\left(B^2\right)
    \right].
\end{equation}
Consequently,
\begin{align}
    \det\left(K|_{\mathcal X}\right)
    &=
    -\frac{1}{2}
    \Tr_{\mathcal X}
    \left[
        \left(K|_{\mathcal X}\right)^2
    \right]
    \\
    &=
    -\Delta^2.
\end{align}
Here the trace on $\mathcal X$ agrees with the full operator trace because
$K$ is supported on the finite-dimensional range of $P$.  Thus, as an
identity embedded in the full space,
\begin{equation}
    K^2
    =
    \Delta^2P.
    \label{eq:cluster_cayley_hamilton}
\end{equation}

Now define
\begin{equation}
    a(z)
    :=
    z-E_c,
    \qquad
    D(z)
    :=
    a(z)^2-\Delta^2.
\end{equation}
From $HP=E_cP+K$, one obtains
\begin{equation}
    (z-H)P
    =
    a(z)P-K.
\end{equation}
Since $PK=KP=K$, Eq.~\eqref{eq:cluster_cayley_hamilton} gives
\begin{align}
    &\left[a(z)P-K\right]
    \left[a(z)P+K\right]
    \\
    &\qquad=
    a(z)^2P-K^2
    \\
    &\qquad=
    D(z)P.
    \label{eq:cluster_resolvent_product}
\end{align}
The same identity holds with the two factors interchanged.  Therefore, when
$D(z)\neq0$, the operator
\begin{equation}
    Q(z)
    :=
    \frac{a(z)P+K}{D(z)}
\end{equation}
satisfies
\begin{equation}
    (z-H)Q(z)
    =
    Q(z)(z-H)
    =
    P.
\end{equation}
It is thus the inverse of $z-H$ on $\mathcal X$, embedded in the full space.
Because $P$ commutes with the resolvent, this inverse is precisely
\begin{equation}
    Q(z)
    =
    (z-H)^{-1}P
    =
    R_{\ell,\theta}(z;q)P.
\end{equation}
Restoring the full notation gives Eq.~\eqref{eq:exact_pair_resolvent}.  The
identity holds away from the cluster poles and, as a meromorphic identity,
determines the behavior at those poles as well.  Finally, $P$ and $\mathcal M$
are exactly the zeroth and first contour moments, while $E_c$, $K$, and
$\Delta^2$ are obtained from them by Eqs.~\eqref{eq:cluster_spectral_center}--
\eqref{eq:cluster_discriminant}.  Hence no spectral data beyond the two contour
moments are required.
\end{proof}

The formula is exact for the spectral component selected by
$\vsub{\Gamma}{pair}$; it is not a phenomenological two-mode truncation of
the full response.  The only decomposition is the identity
$R_{\ell,\theta}=\vsub{R}{pair}^{\theta}+
\vsub{R}{rest}^{\theta}$, with the remainder retained explicitly.  Away from
the EP, the quadratic denominator factors into the two simple-pole
denominators.  At the EP, $\Delta^2$ vanishes while
$\vsub{K}{pair}^{\theta}$ remains nonzero and nilpotent, so the same formula
becomes the simple-plus-double-pole Laurent form.  No change of ansatz is
required at the coalescence.

Away from the EP, choose either local branch of $\Delta$.  The two eigenvalues
and their projectors are
\begin{align}
    E_{\pm}(q)
    &=
    \vsub{E}{c}(q)\pm\Delta(q),
    \\
    P_{\pm}^{\theta}(q)
    &=
    \frac{1}{2}
    \left[
        \vsub{P}{pair}^{\theta}(q)
        \pm
        \frac{\vsub{K}{pair}^{\theta}(q)}{\Delta(q)}
    \right].
    \label{eq:individual_projectors_from_cluster}
\end{align}
A detailed derivation and direct verification of these projectors are given in Appendix~\ref{app:individual_projectors}.
Equation~\eqref{eq:individual_projectors_from_cluster} displays explicitly why
the individual projectors can contain opposite $1/\Delta$ terms, while their
sum and the exact pair resolvent remain finite.  The branch exchange resides in
$\Delta$; the quantities $\vsub{P}{pair}^{\theta}$,
$\vsub{K}{pair}^{\theta}$, $\vsub{E}{c}$, and $\Delta^2$ are single-valued for
an analytic operator family as long as the contour continues to isolate the
same cluster.

This is the local operator counterpart of the path-dependent spectral
reorganization found in Kerr EP cascades
\cite{Cavalcante:2025abr}:
a permutation of the two branches enclosed by a fixed isolating contour
does not change
the pair projector or the operator restricted to that root subspace.

\paragraph{Relation to the mean QNM frequency}
The spectral center $\vsub{E}{c}$ should be distinguished from the mean
QNM frequency often used to characterize a resonant pair in black-hole
perturbation theory
\cite{Motohashi:2024fwt,Yang:2025dbn,PanossoMacedo:2025xnf}.
Away from the EP, let
\begin{equation}
    E_{\pm}
    =
    \omega_{\pm}^{2},
    \qquad
    \overline{\omega}
    :=
    \frac{\omega_{+}+\omega_{-}}{2},
    \qquad
    \delta\omega
    :=
    \frac{\omega_{+}-\omega_{-}}{2}.
\end{equation}
Then the two center variables are related by
\begin{align}
    \vsub{E}{c}
    &=
    \frac{E_{+}+E_{-}}{2}
    \notag\\
    &=
    \overline{\omega}^{\,2}
    +
    \delta\omega^{2}.
    \label{eq:spectral_center_frequency_mean}
\end{align}
Thus $\vsub{E}{c}$ is not, away from the EP, the square of the mean QNM
frequency.  The quantity $\overline{\omega}$ is natural for describing the
oscillation frequency of a near-resonant time-domain signal, whereas
$\vsub{E}{c}$ is the canonical center of the present $z$-plane resolvent:
it is obtained directly from a Riesz contour moment and enters the reduced
characteristic polynomial without assigning labels to the two QNM branches.
At the EP, $\delta\omega=0$, and hence
\begin{equation}
    \vsub{E}{c}(\varEP{q})
    =
    \varEP{\omega}^{\,2}
\end{equation}
on the QNM sheet under consideration.  The two descriptions therefore agree
at the coalescence while organizing the neighborhood of the EP in different
spectral variables.

At a second-order EP,
\begin{equation}
    \Delta(\varEP{q})^2=0,
    \qquad
    \vsub{K}{pair}^{\theta}(\varEP{q})\neq0,
    \qquad
    \vsub{K}{pair}^{\theta}(\varEP{q})^2=0.
    \label{eq:cluster_ep_conditions}
\end{equation}
The exact formula therefore becomes
\begin{equation}
    \vsub{R}{pair}^{\theta}(z;\varEP{q})
    =
    \frac{\vsub{K}{pair}^{\theta}(\varEP{q})}
         {(z-\varEP{E})^2}
    +
    \frac{\vsub{P}{pair}^{\theta}(\varEP{q})}
         {z-\varEP{E}},
    \qquad
    \varEP{E}=\vsub{E}{c}(\varEP{q}).
    \label{eq:pair_resolvent_ep_from_moments}
\end{equation}
Thus the nilpotent operator multiplying the double pole is obtained from the
same two contour moments as the simple-pole coefficient, without constructing
a Jordan chain.

The complementary resolvent is
\begin{equation}
    \vsub{R}{rest}^{\theta}(z;q)
    :=
    R_{\ell,\theta}(z;q)
    -\vsub{R}{pair}^{\theta}(z;q).
    \label{eq:ep_rest_resolvent}
\end{equation}
Pulling the pair and rest sectors back to the physical contour gives
\begin{equation}
    R_{\varkappa}(z;q)
    =
    U_{\theta}^{-1}
    R_{\varkappa}^{\theta}(z;q)
    U_{\theta},
    \qquad
    \varkappa\in\{\mathrm{pair},\mathrm{rest}\}.
    \label{eq:physical_pair_rest_resolvents}
\end{equation}

\begin{proposition}[Scaling-angle independence of physical moments]
Let $f$ and $g$ be dilation-analytic vectors, and let the deformation angle
vary without crossing a singularity or changing the cluster enclosed by
$\vsub{\Gamma}{pair}$.  Then
\begin{align}
    p_{g,f}(q)
    &:=
    \braket{
        g|U_{\theta}^{-1}
        \vsub{P}{pair}^{\theta}(q)
        U_{\theta}|f
    }
    \\
    &=
    \frac{1}{2\pi\rmi}
    \oint_{\vsub{\Gamma}{pair}}
    \braket{g|R_{\ell}(z;q)|f}\,\rmd z,
    \label{eq:physical_projector_moment}
    \\
    m_{g,f}(q)
    &:=
    \braket{
        g|U_{\theta}^{-1}
        \vsub{\mathcal M}{pair}^{\theta}(q)
        U_{\theta}|f
    }
    \\
    &=
    \frac{1}{2\pi\rmi}
    \oint_{\vsub{\Gamma}{pair}}
    z\braket{g|R_{\ell}(z;q)|f}\,\rmd z.
    \label{eq:physical_first_moment}
\end{align}
These scalar moments are independent of $\theta$.
\end{proposition}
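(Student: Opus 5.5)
The plan has two stages. First, pull the contour integrals through the matrix element and use the similarity relation~\eqref{eq:resolvent_similarity_relation} to obtain the second equality in each of Eqs.~\eqref{eq:physical_projector_moment} and~\eqref{eq:physical_first_moment}. Second, observe that the right-hand sides contain no $\theta$ at all, except possibly through the contour, and show that the contour can be held fixed as $\theta$ varies.

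\emph{Stage one.} The Riesz integrals in Eqs.~\eqref{eq:ep_riesz_projector} and~\eqref{eq:cluster_first_moment} converge in operator norm, because $\vsub{\Gamma}{pair}$ is a compact curve lying in the resolvent set of $H_{\ell,\theta}(q)$. The map
\begin{equation*}
A\mapsto\braket{g|U_{\theta}^{-1}AU_{\theta}|f}
\end{equation*}
is a continuous linear functional once we use the dilation-analytic extension, namely
\begin{equation*}
\braket{g|U_\theta^{-1}AU_\theta|f}=\braket{U_{\theta}^{-\dagger}g|A|U_\theta f},
\end{equation*}
where the transformed vectors are the analytic continuations $g_{\bar\theta}$ and $f_\theta$. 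Since the functional is continuous and linear, it commutes with the contour integral, and we obtain
\begin{equation*}
p_{g,f}=\frac{1}{2\pi\rmi}\oint_{\vsub{\Gamma}{pair}}\braket{g|U_\theta^{-1}R_{\ell,\theta}(z;q)U_\theta|f}\,\rmd z .
\end{equation*}
The same step applies to $m_{g,f}$ with the extra factor $z$.

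Next we substitute Eq.~\eqref{eq:resolvent_similarity_relation} pointwise on $\vsub{\Gamma}{pair}$. That relation is established for $z$ in the initial resolvent domain, with $\im$ of the frequency positive, and is extended to the continued sheet by analytic continuation of the dilation-analytic matrix element. We therefore need the identity
\begin{equation*}
\braket{g|R_\ell(z;q)|f}=\braket{g_{\bar\theta}|R_{\ell,\theta}(z;q)|f_\theta}
\end{equation*}
to hold along the contour, with the left side understood as the analytically continued physical matrix element. This gives the claimed integral representations.

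\emph{Stage two.} For $\theta$-independence, the cleanest route is the following. Fix $\theta_0$ and define
\begin{equation*}
F_\theta(z):=\braket{g_{\bar\theta}|R_{\ell,\theta}(z)|f_\theta}.
\end{equation*}
By the standard Aguilar--Balslev--Combes argument recalled in Appendix~\ref{sec:abc}, $\theta\mapsto F_\theta(z)$ is analytic in $\theta$ for fixed $z$ in a region that is not swept by the rotated continuum. On the real-$\theta$ slice it is constant for $z$ in the upper region, since there it equals the physical matrix element, which does not depend on $\theta$. By the identity theorem it is then constant wherever it is jointly analytic. The hypotheses provide two facts: the admissible $\theta$-interval crosses no singularity, and the enclosed cluster does not change. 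Because the rotated cut moves by about $2\theta$ while the isolated eigenvalues stay fixed, we may choose a single contour $\vsub{\Gamma}{pair}$ that isolates the pair for every $\theta$ in the interval. Then $F_\theta=F_{\theta_0}$ on this common contour, so both integrals are independent of $\theta$.

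If it is necessary to change the contour, Cauchy's theorem lets us deform $\vsub{\Gamma}{pair}$ inside the resolvent set without changing either integral, since $F_\theta$ and $zF_\theta$ are holomorphic between the two curves.

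\emph{Main obstacle.} The delicate step is the uniform-in-$\theta$ joint analyticity of $F_\theta(z)$ on a neighbourhood of a fixed contour. This requires that the rotated cut never enters the annular region containing $\vsub{\Gamma}{pair}$, and that the continuation of $r(x)$ and of the vectors $f$ and $g$ remains single-valued. I would handle this by restricting $\theta$ to a compact subinterval and invoking continuity of the spectrum of the analytic family $H_{\ell,\theta}$. Continuity keeps the cluster separated from the rest of the spectrum by a positive distance, so one fixed contour suffices on each such subinterval, and the subintervals can then be chained to cover the full admissible range.
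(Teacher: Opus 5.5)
Your proposal is correct and follows essentially the same route as the paper: insert the similarity relation~\eqref{eq:resolvent_similarity_relation} under the two contour integrals, then observe that what remains is a contour moment of the unique meromorphic continuation of $\braket{g|R_{\ell}(z;q)|f}$, which contains no $\theta$. Your additional details are sound, though the detour through analyticity in $\theta$ is unnecessary, because uniqueness of the continuation in $z$ for each fixed $\theta$ already gives the result.
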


\begin{proof}
Insert the similarity relation~\eqref{eq:resolvent_similarity_relation}
inside the two contour integrals.  Their right-hand sides are contour moments
of the unique meromorphic continuation of the physical scalar resolvent and
contain no reference to the deformation angle.
\end{proof}

It is also useful to introduce the matrix element of the traceless cluster
operator,
\begin{align}
    k_{g,f}(q)
    &:=
    m_{g,f}(q)
    -
    \vsub{E}{c}(q)p_{g,f}(q)
    \notag\\
    &=
    \braket{
        g
        |
        U_{\theta}^{-1}
        \vsub{K}{pair}^{\theta}(q)
        U_{\theta}
        |
        f
    }.
    \label{eq:physical_traceless_cluster_moment}
\end{align}

For scattering channels, plane waves themselves are not dilation-analytic
$L^2$ vectors.  The transition-operator form below instead couples the
resolvent to the localized vectors
$\mathsf V_{\ell}\ket{-\omega}$; equivalently, ECS permits the channel
matching to be performed in an undeformed region.  These are the appropriate
real-frequency realizations of the same physical contour moments.

\subsection{Pair/rest transmission amplitude}
\label{sec:pair_resolved_transmission}

For the sector decomposition it is useful to pass from the Green kernel to
the transition operator.\footnote{%
The construction in
Eqs.~\eqref{eq:transition_operator}--%
\eqref{eq:transmission_from_transition_operator}
is the standard Lippmann--Schwinger formulation of one-dimensional
stationary scattering~\cite{Taylor:1972}, expressed using our resolvent convention
$R_{\ell}(z;q)=[z-H_{\ell}(q)]^{-1}$ and channel normalization.
We include it to fix conventions and to connect the complex-scaled
resolvent explicitly with the on-shell transmission amplitude.
The EP-specific step is the subsequent Riesz decomposition into
$\vsub{R}{pair}$ and $\vsub{R}{rest}$, which reorganizes the
standard scattering amplitude into contributions that remain
well-defined through the mode coalescence.
}  Let
\begin{equation}
    H_0
    =
    -\frac{\rmd^2}{\rmd x^2},
    \qquad
    \mathsf V_{\ell}(q)
    :=
    H_{\ell}(q)-H_0,
\end{equation}
and define
\begin{equation}
    \mathsf T_{\ell}(z;q)
    =
    \mathsf V_{\ell}(q)
    +
    \mathsf V_{\ell}(q)
    R_{\ell}(z;q)
    \mathsf V_{\ell}(q).
    \label{eq:transition_operator}
\end{equation}
The resolvent identity then reads
\begin{equation}
    R_{\ell}
    =
    R_0
    +R_0\mathsf T_{\ell}R_0.
    \label{eq:resolvent_transition_identity}
\end{equation}
For plane waves normalized by
$\braket{x|{-\omega}}=\rme^{-\rmi\omega x}$, the full transmission amplitude
can equivalently be written as
\begin{equation}
    \mathcal T_{\ell}(\omega;q)
    =
    1
    +
    \frac{1}{2\rmi\omega}
    \braket{
        {-\omega}
        |\mathsf T_{\ell}(z_{\omega};q)|
        {-\omega}
    }.
    \label{eq:transmission_from_transition_operator}
\end{equation}

Equations~\eqref{eq:ep_pair_resolvent} and
\eqref{eq:ep_rest_resolvent} induce
\begin{align}
    \mathsf T_{\ell,\mathrm{pair}}(z;q)
    &:=
    \mathsf V_{\ell}(q)
    R_{\mathrm{pair}}(z;q)
    \mathsf V_{\ell}(q),
    \\
    \mathsf T_{\ell,\mathrm{rest}}(z;q)
    &:=
    \mathsf V_{\ell}(q)
    +
    \mathsf V_{\ell}(q)
    R_{\mathrm{rest}}(z;q)
    \mathsf V_{\ell}(q).
    \label{eq:transition_pair_rest_sectors}
\end{align}
The free-transmission term is assigned to the rest sector.  We then obtain
\begin{align}
    \mathcal T_{\ell}(\omega;q)
    &=
    \mathcal T_{\ell,\mathrm{pair}}(\omega;q)
    +
    \mathcal T_{\ell,\mathrm{rest}}(\omega;q),
    \label{eq:transmission_pair_rest_split}
    \\
    \mathcal T_{\ell,\mathrm{pair}}(\omega;q)
    &:=
    \frac{1}{2\rmi\omega}
    \braket{
        {-\omega}
        |\mathsf T_{\ell,\mathrm{pair}}(z_{\omega};q)|
        {-\omega}
    },
    \\
    \mathcal T_{\ell,\mathrm{rest}}(\omega;q)
    &:=
    1
    +
    \frac{1}{2\rmi\omega}
    \braket{
        {-\omega}
        |\mathsf T_{\ell,\mathrm{rest}}(z_{\omega};q)|
        {-\omega}
    }.
    \label{eq:transmission_sector_definition}
\end{align}

The pair/rest split is a coherent decomposition of an amplitude, not a
division into two mutually exclusive scattering processes.  The pair sector
contains the contribution of the chosen Riesz cluster, whereas the rest
sector contains the free-transmission term, all other poles, and the
continuum contribution.  Deforming $\vsub{\Gamma}{pair}$ without crossing the
spectrum leaves the split unchanged, but only the sum is the complete
transmission amplitude.  The separate terms are useful diagnostics precisely
because their interference is retained.

This form couples the spectral resolvent to the localized channel vectors
$\mathsf V_{\ell}\ket{-\omega}$ and is therefore preferable to applying an
asymptotic limit separately to back-rotated QNM projectors.
The greybody factor is not a sum of two partial probabilities.  Rather,
\begin{align}
    \Gamma_{\ell}(\omega;q)
    &=
    \left|
        \mathcal T_{\ell,\mathrm{pair}}
    \right|^2
    +
    \left|
        \mathcal T_{\ell,\mathrm{rest}}
    \right|^2
    \notag\\
    &\quad
    +2\re
    \left[
        \mathcal T_{\ell,\mathrm{pair}}
        \left(
            \mathcal T_{\ell,\mathrm{rest}}
        \right)^{*}
    \right],
    \label{eq:greybody_pair_rest_interference}
\end{align}
where the common arguments $(\omega;q)$ are suppressed on the right-hand
side.  The interference term is essential: large individual resonant
contributions need not imply a large change in the full transmission.

Applying the exact cluster formula~\eqref{eq:exact_pair_resolvent} to the two
localized channel vectors gives
\begin{equation}
    \mathcal T_{\ell,\mathrm{pair}}(\omega;q)
    =
    \frac{
        \left[z_{\omega}-\vsub{E}{c}(q)\right]
        \vsub{\tau}{P}(\omega;q)
        +\vsub{\tau}{K}(\omega;q)
    }{
        \left[z_{\omega}-\vsub{E}{c}(q)\right]^2
        -\Delta(q)^2
    }.
    \label{eq:exact_pair_transmission}
\end{equation}
Here the two channel moments are
\begin{align}
    \vsub{\tau}{P}(\omega;q)
    &:=
    \frac{1}{2\rmi\omega}
    \braket{
        {-\omega}
        |\mathsf V_{\ell}(q)
        U_{\theta}^{-1}
        \vsub{P}{pair}^{\theta}(q)
        U_{\theta}
        \mathsf V_{\ell}(q)|
        {-\omega}
    },
    \\
    \vsub{\tau}{K}(\omega;q)
    &:=
    \frac{1}{2\rmi\omega}
    \braket{
        {-\omega}
        |\mathsf V_{\ell}(q)
        U_{\theta}^{-1}
        \vsub{K}{pair}^{\theta}(q)
        U_{\theta}
        \mathsf V_{\ell}(q)|
        {-\omega}
    }.
    \label{eq:pair_channel_moments}
\end{align}
This expression is valid both away from and at the EP.  Away from the EP it
equals the sum of the two simple-pole channel amplitudes, but it does not
require either individual projector in
Eq.~\eqref{eq:individual_projectors_from_cluster}.

At the EP, Eqs.~\eqref{eq:pair_resolvent_ep_from_moments} and
\eqref{eq:exact_pair_transmission} give
\begin{equation}
    \mathcal T_{\ell,\mathrm{pair}}
    (\omega;\varEP{q})
    =
    \frac{\vsub{\tau}{K}(\omega;\varEP{q})}
    {(z_{\omega}-\varEP{E})^2}
    +
    \frac{\vsub{\tau}{P}(\omega;\varEP{q})}
    {z_{\omega}-\varEP{E}}.
    \label{eq:ep_pair_transmission_laurent}
\end{equation}
Because the channel vectors and the prefactor in
Eq.~\eqref{eq:pair_channel_moments} depend on $\omega$, the two numerators in
Eq.~\eqref{eq:ep_pair_transmission_laurent} should not themselves be called
Laurent coefficients in either $z$ or $\omega$.  The operator-valued Laurent
coefficients are $\vsub{P}{pair}^{\theta}$ and
$\vsub{K}{pair}^{\theta}(\varEP{q})$; scalar Laurent coefficients are obtained
only after expanding the complete on-shell matrix element about
$\omega=\varEP{\omega}$.
Since the QNM value $\varEP{E}$ lies off the positive real axis, neither
denominator vanishes for real $\omega$.  Thus the double pole governs the
analytic continuation while the on-shell pair amplitude remains finite.

\subsection{Regularity of physical scattering across the EP}
\label{sec:ep_scattering_diagnostic}

To compare the complex spectrum with the real-axis response, consider a path
through the two-parameter plane,
\begin{equation}
    q(s)
    =
    \varEP{q}+s\widehat{v},
    \qquad
    s\in\bbR,
    \label{eq:ep_crossing_path}
\end{equation}
where $\widehat{v}$ is a fixed direction in parameter space.  Along this path
the two QNM branches exhibit a square-root splitting and their separate
projectors become ill-conditioned and may acquire large operator norms, even
though each projector is invariant under reciprocal rescaling of its left and
right eigenvectors.

In the following proposition, $\omega$ and hence $z_{\omega}$ are held fixed,
while $s$ changes the operator.  The assumptions ensure that the evaluation
point never collides with a pole and that the same rank-two cluster remains
isolated.  Under these conditions, the square-root behavior belongs to a
particular labeling of the two eigenvalues and projectors; it need not appear
in the cluster matrix element evaluated at $z_{\omega}$.

\begin{proposition}[Absence of a real-axis EP branch singularity]
Suppose that $H_{\ell,\theta}(q(s))$ is an analytic family near $s=0$, the
rank-two cluster remains isolated, the complementary resolvent is regular, and
no pole reaches $z_{\omega}$ for a fixed real $\omega>0$.  Suppose also that
the localized channel vectors in Eq.~\eqref{eq:pair_channel_moments} define
analytic matrix elements in one common ECS or continuation domain.  Then
$\mathcal T_{\ell,\mathrm{pair}}(\omega;q(s))$ and the full transmission
amplitude are single-valued and real-analytic in $s$ near the EP.  The greybody
factor $\Gamma_{\ell}(\omega;q(s))$ is consequently real-analytic there.
\end{proposition}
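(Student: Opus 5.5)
The plan is to show that every ingredient on the right-hand side of Eq.~\eqref{eq:exact_pair_transmission} is analytic in $s$. Then the denominator is nonvanishing at $z_{\omega}$, and regularity follows for the pair term, for the rest term, and finally for $\Gamma_{\ell}$.

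\emph{Analyticity of the moments.} Because the cluster stays isolated, we can fix one contour $\vsub{\Gamma}{pair}$ that works for all $s$ in a small interval around $0$. Analytic perturbation theory (Kato) makes $R_{\ell,\theta}(z;q(s))$ jointly analytic in $(z,s)$ for $z\in\vsub{\Gamma}{pair}$. Differentiating under the integral in Eqs.~\eqref{eq:ep_riesz_projector} and \eqref{eq:cluster_first_moment} then shows that $\vsub{P}{pair}^{\theta}(q(s))$ and $\vsub{\mathcal M}{pair}^{\theta}(q(s))$ are analytic, operator-valued functions of $s$. From these, $\vsub{E}{c}$, $\vsub{K}{pair}^{\theta}$ and $\Delta^2$ follow by traces and products of analytic finite-rank operators, so they are analytic as well. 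No square root $\Delta$ is ever taken. By the assumption of a common ECS or continuation domain, pairing these operators with the channel vectors $\mathsf V_{\ell}(q(s))\ket{-\omega}$ gives analytic scalar functions $\vsub{\tau}{P}(\omega;q(s))$ and $\vsub{\tau}{K}(\omega;q(s))$. The requirement that $\mathsf V_{\ell}$ depend analytically on the real parameters is part of the analytic-family hypothesis.

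\emph{Nonvanishing denominator.} The roots of $D(z)=[z-\vsub{E}{c}]^2-\Delta^2$ are exactly the eigenvalues of the cluster, $\vsub{E}{c}\pm\Delta$. The hypothesis that no pole reaches $z_{\omega}$ therefore gives $D(z_{\omega};s)\neq0$ for $s$ near $0$, including $s=0$, where $D=(z_{\omega}-\varEP{E})^2$. Hence $\mathcal T_{\ell,\mathrm{pair}}$ is a quotient of analytic functions with a nonvanishing denominator, and so it is analytic and single-valued. Single-valuedness holds because only the symmetric quantities $\vsub{E}{c}$ and $\Delta^2$ appear. The branch exchange $\Delta\to-\Delta$ leaves every ingredient unchanged.

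\emph{Rest sector.} The rest resolvent $\vsub{R}{rest}^{\theta}(z;q)=R_{\ell,\theta}(1-\vsub{P}{pair}^{\theta})$ is holomorphic inside $\vsub{\Gamma}{pair}$. It is assumed regular at $z_{\omega}$, and it is jointly analytic in $(z,s)$ there, because $R_{\ell,\theta}$ is analytic off the spectrum and $\vsub{P}{pair}^{\theta}$ is analytic in $s$. Equation~\eqref{eq:transmission_sector_definition} then makes $\mathcal T_{\ell,\mathrm{rest}}$ analytic. The free term $1$ and the Born term $\mathsf V_{\ell}$ are trivially analytic. By Eq.~\eqref{eq:transmission_pair_rest_split}, $\mathcal T_{\ell}$ is analytic in $s$.

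\emph{Restriction to real $s$.} For real $s$, write $\Gamma_{\ell}=\mathcal T_{\ell}(s)\,\overline{\mathcal T_{\ell}(\bar s)}$. This is the restriction to the real axis of a holomorphic function, so it is real-analytic.

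\emph{Main obstacle.} The delicate step is justifying analyticity of the channel moments. The plane waves $\ket{-\omega}$ are not $L^2$ vectors, and $U_{\theta}^{-1}$ must act on $\vsub{P}{pair}^{\theta}U_{\theta}\mathsf V_{\ell}\ket{-\omega}$ in a domain that is uniform in $s$. My first attempt would be ECS with the support of the relevant part of $\mathsf V_{\ell}$ kept inside the undeformed interval. Where that is impossible, I would truncate the long-range tail and invoke the stated hypothesis that the matrix elements are analytic in a common domain. Uniform isolation of the cluster is what keeps the contour fixed, so the Cauchy integral can be differentiated in $s$.
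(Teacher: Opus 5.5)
Your proposal is correct and follows the paper's own argument. The steps match: fix one isolating contour so that the two moments are analytic; deduce that $E_c$, $K$, $\Delta^2$ and the channel moments are analytic; note that the denominator is nonzero at $z_{\omega}$ by assumption; add the regular rest amplitude; and obtain $\Gamma_{\ell}$ by restricting to real $s$ and multiplying by the complex conjugate. You supply details the paper leaves implicit, namely the reflection $\mathcal T_{\ell}(s)\overline{\mathcal T_{\ell}(\bar s)}$ and the fact that no branch of $\Delta$ is ever chosen.
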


\begin{proof}
For a fixed cluster contour, the moments
$\vsub{P}{pair}^{\theta}$ and $\vsub{\mathcal M}{pair}^{\theta}$ are analytic operator
families.  Hence $\vsub{E}{c}$, $\vsub{K}{pair}^{\theta}$, $\Delta^2$, and the
channel moments in Eq.~\eqref{eq:pair_channel_moments} are single-valued and
analytic.  The denominator of Eq.~\eqref{eq:exact_pair_transmission} is
nonzero by assumption.  Adding the regular complementary amplitude preserves
analyticity.  Restriction to real $s$ and multiplication by the complex
conjugate give the stated result for $|\mathcal T_{\ell}|^2$.
\end{proof}

This result is local and does not imply that the greybody factor changes only
slightly: it may vary rapidly while remaining nonsingular.  The numerical
workflow in Appendix~\ref{app:numerical_realization} can quantify that
variation by tracking the two coalescing
eigenvalues, the individual and cluster projectors, the full greybody factor,
and the three terms in Eq.~\eqref{eq:greybody_pair_rest_interference} along the
same path.  Comparison with direct Jost integration and the flux identity in
Eq.~\eqref{eq:greybody_unitarity} would test the channel reconstruction, while
changes of scaling angle, basis size, matching surfaces, and Riesz contour
would test the implementation of the pair/rest decomposition.

\paragraph{Modal singularity versus observable regularity}
The distinction established by the proposition is important.  The EP is a
genuine singularity of the analytically continued resolvent in the spectral
variable: at $z=\varEP{E}$, the two simple poles coalesce into the double pole
in Eq.~\eqref{eq:pair_resolvent_ep_from_moments}.  The same coalescence makes
the individual projectors in
Eq.~\eqref{eq:individual_projectors_from_cluster} branch dependent and allows
opposite contributions of order $1/\Delta$.  It does not follow, however, that
the resolvent evaluated at a fixed physical frequency is singular as a
function of the operator parameters.

The cancellation is structural rather than a fitted cancellation between two
large modal amplitudes.  Equation~\eqref{eq:exact_pair_transmission} depends
only on the single-valued cluster quantities
$\vsub{E}{c}$, $\vsub{P}{pair}^{\theta}$,
$\vsub{K}{pair}^{\theta}$, and $\Delta^2$.  Consequently, branch exchange and
the growth of individual QNM residues diagnose a singular modal coordinate
system, not a necessary singularity of the physical transmission amplitude.
Under the assumptions of the proposition, the full transmission amplitude
and the greybody factor remain real-analytic across the EP.

This regularity does not imply a small response.  The amplitude may vary
rapidly or become strongly enhanced when the complex pole lies close to the
physical axis, and pair/rest interference may produce pronounced spectral
features.  What is excluded is only a branch singularity caused by the
coalescence and relabeling of the two QNMs.

\textit{Thus, a singular reorganization of the QNM decomposition does not, by
itself, imply a singular real-frequency scattering observable.}

The same $\vsub{R}{pair}$ also acts on a localized source.  Hence the
pair-resolved scattering amplitude in
Eq.~\eqref{eq:exact_pair_transmission} and the driven response analyzed
below are two channel choices for one EP-resolved operator, rather than
independent constructions.

\section{Response to a localized source}
\label{sec:source_response}


We use a Gaussian profile as the simplest controlled model of a localized
perturbation,
\begin{equation}
    f_{\ell}(x;\vsub{x}{s},\sigma)
    =
    \frac{A_{\ell}}{\sqrt{2\pi}\,\sigma}
    \exp\!\left[
        -\frac{(x-\vsub{x}{s})^2}{2\sigma^2}
    \right].
    \label{eq:gaussian_source_profile}
\end{equation}
Here $\vsub{x}{s}$ specifies the source location in tortoise coordinates,
$\sigma$ controls its width, and $A_{\ell}$ is the integrated impulse in
the chosen master-variable normalization.

The physical meaning of the time dependence in
$S_\ell(t)$~\eqref{eq:impulsive_source} follows directly by integrating the wave
equation across $t=0$.  Since
\begin{equation}
    \partial_t^2\psi_{\ell}(t,x)
    +H_{\ell}\psi_{\ell}(t,x)
    =
    -\delta(t)f_{\ell}(x),
\end{equation}
the field is continuous and its first time derivative satisfies
\begin{equation}
    \partial_t\psi_{\ell}(0^+,x)
    -
    \partial_t\psi_{\ell}(0^-,x)
    =
    -f_{\ell}(x).
    \label{eq:impulse_jump_condition}
\end{equation}
For a causal solution that vanishes before the impulse, for simplicity, this is
equivalent to the initial data
\begin{equation}
    \psi_{\ell}(0^+,x)=0,
    \qquad
    \partial_t\psi_{\ell}(0^+,x)=-f_{\ell}(x).
    \label{eq:equivalent_initial_data}
\end{equation}
Thus the model describes an instantaneous, spatially localized kick,
followed by free propagation and ringdown.  The minus sign is fixed by
the sign convention in the original wave equation~\eqref{eq:rwz_time_domain}.

The three ingredients of the driven problem should be distinguished.  The
background and any potential deformation determine $H_{\ell}$ and hence the
resolvent poles.  The profile $f_{\ell}$ determines the initial kick and thus
which part of the spectrum is excited.  An observation functional $g$ then
specifies which component of the propagated field is extracted.  Changing
$f_{\ell}$ or $g$ changes the measured amplitude but leaves the QNM
frequencies unchanged.  The source--observer coefficients introduced below
are therefore bilinear channel quantities rather than intrinsic properties of
a QNM alone.

The frequency-domain field at an observation point is
\begin{equation}
    \widetilde{\psi}_{\ell}(\omega,x)
    =
    \int_{-\infty}^{\infty}
    \rmd x'\,
    G_{\ell}(\omega;x,x')
    f_{\ell}(x').
    \label{eq:source_green_convolution}
\end{equation}
In the complex-scaled basis, it takes the form
\begin{equation}
    \widetilde{\psi}_{\ell}(\omega,x)
    =
    \sum_{\nu}
    \frac{
        \Phi_{\ell\nu}^{R}(x)
        \mathcal C_{\ell\nu}[f]
    }{
        \omega^2-E_{\ell\nu}^{\theta}
    },
    \label{eq:source_spectral_response}
\end{equation}
where
\begin{align}
    \Phi_{\ell\nu}^{R}(x)
    &=
    \braket{x|U_{\theta}^{-1}|\psi_{\ell\nu,\theta}^{R}},
    &
    \mathcal C_{\ell\nu}[f]
    &=
    \braket{\psi_{\ell\nu,\theta}^{L}|U_{\theta}|f_{\ell}}.
    \label{eq:source_and_profile_factors}
\end{align}
The first factor gives the back-rotated coordinate-space profile of the right spectral state evaluated at $x$, while the second is the corresponding left-state overlap with the transformed source and therefore controls the source coupling to that spectral component. Their product is invariant under reciprocal rescalings of the left and right
eigenvectors.

For pointwise extraction one may regard $g$ as the coordinate-evaluation
functional $\bra{x}$, with the usual distributional interpretation.  In a
regularized calculation it may instead be a localized test profile, while for
scattering it is an asymptotic channel functional.  Thus $g$ need not be a
normalizable state or an additional dynamical degree of freedom.  It is simply
the linear map that turns the propagated field into the scalar response under
consideration.  Correspondingly, $\Phi_{\ell\nu}^{R}(x)$ is a coordinate-space
profile evaluated at $x$; only its product with the source overlap contributes
to the response.

For an isolated resonance with $E_{\ell n}^{\theta}=\omega_{\ell n}^2$, the
residue in the $\omega$ plane contains an additional Jacobian from
$z=\omega^2$.  For a general observation functional $g$, we define the
simple-pole excitation coefficient
\begin{equation}
    \mathcal B_{\ell n}[g,f]
    =
    \frac{
        \braket{g|U_{\theta}^{-1}|\psi_{\ell n,\theta}^{R}}
        \braket{\psi_{\ell n,\theta}^{L}|U_{\theta}|f_{\ell}}
    }{2\omega_{\ell n}}.
    \label{eq:simple_pole_excitation_coefficient}
\end{equation}
With the Fourier convention in Eq.~\eqref{eq:fourier_pair}, the corresponding
positive-frequency contribution to the causal signal is proportional to
$-\rmi\mathcal B_{\ell n}\rme^{-\rmi\omega_{\ell n}t}$.  Equation~\eqref{eq:simple_pole_excitation_coefficient}
separates the universal pole location from the source- and profile-dependent
amplitude.

Related notions of QNM excitation coefficients and source-independent
excitation factors have long been developed in modal and Green-function
analyses
\cite{Nollert:1998ys,Berti:2006wq}.
The coefficient in Eq.~\eqref{eq:simple_pole_excitation_coefficient},
however, is explicitly a source-to-observation quantity and therefore
contains both the prescribed source overlap and the observation functional.

For diagnostic purposes, the finite-basis sum can be divided into
isolated resonance eigenvalues and pseudostates associated with the
rotated continuum,
\begin{equation}
    \widetilde{\psi}_{\ell}
    =
    \vsup{\widetilde{\psi}}{pole}_{\ell}
    +
    \vsup{\widetilde{\psi}}{cont}_{\ell} .
    \label{eq:pole_continuum_split}
\end{equation}
Only the total is independent of the numerical representation. The separate pieces must be checked under changes of the scaling angle and basis size. These checks are especially important when a broad resonance approaches the rotated continuum. Scanning $\vsub{x}{s}$ and $\sigma$ in Eq.~\eqref{eq:gaussian_source_profile} then identifies source profiles that preferentially excite a given pole or suppress its overlap. In the narrow-source limit, $\mathcal C_{\ell\nu}[f]$ probes the local structure of the left spectral state, but the observable amplitude still requires its product with the corresponding right-state profile $\Phi_{\ell\nu}^{R}(x)$ in Eq.~\eqref{eq:source_and_profile_factors}.

The time-domain signal is obtained from
\begin{equation}
    \psi_{\ell}(t,x)
    =
    \frac{1}{2\pi}
    \int_{\mathcal C}
    \rmd\omega\,
    \rme^{-\rmi\omega t}
    \widetilde{\psi}_{\ell}(\omega,x).
    \label{eq:source_inverse_fourier}
\end{equation}
Deforming $\mathcal C$ into the lower half-plane separates residues of
simple QNM poles from the deformed-continuum integral.  The latter is
essential for the prompt response and, in the asymptotically flat case,
for the branch-cut contribution associated with the late-time tail.\footnote{%
A recent decomposition of the Schwarzschild Green function makes the
importance of the non-pole sector particularly explicit
\cite{Su:2026fvj}.
By separating the frequency-domain kernel into two components with
different large-frequency behavior, the direct response, the QNM
contribution, and the late-time tail can be isolated using contours
adapted to different causal spacetime regions.
This analysis also shows that a QNM residue expansion is not a globally
valid representation of the response at all times: its applicable
domain depends on the source and observation locations.}
A finite set of complex-scaled pseudostates approximates this integral
over a finite frequency and time window; convergence of the waveform is
therefore a stronger test than convergence of isolated eigenvalues.

A physical particle or matter source can be included by replacing the
Gaussian profile with the appropriate harmonic projection
$\widetilde S_{\ell m}^{\parity}(\omega,x)$.  The resolvent of the chosen background operator is unchanged.
A point-particle source generally contains radial delta distributions
and their derivatives and therefore does not define an ordinary
dilation-analytic source vector.  Its transformation under global
complex scaling requires an additional prescription.  If the radial
support of the source is contained within the undeformed region of an
ECS contour, however, the source remains on the real axis and can be
imposed through the usual jump conditions.  We leave a detailed
implementation of particle sources for future work.


\section{Exceptional-point limit and generalized source amplitudes}
\label{sec:exceptional_points}

\subsection{Gaussian-deformed Regge--Wheeler model setting}
\label{sec:gaussian_bump_ep}

As a concrete two-parameter family containing an exceptional point,
we follow the environmental-potential model of Ref.~\cite{Yang:2025dbn}.  We modify the odd-parity $\ell=2$ operator according to
\begin{align}
    H_{\varepsilon,d,\sigma}
    & =
    -\frac{\rmd^2}{\rmd x^2}
    +\vsup{V}{odd}_{2}(x)
    +\vsub{V}{G}(x;\varepsilon,d,\sigma),
    \label{eq:ep_deformed_operator}
    \\
    \vsub{V}{G}(x;\varepsilon,d,\sigma)
    & =
    \frac{\varepsilon}{M^2}
    \exp\!\left[-\frac{(x-d)^2}{2\sigma^2}\right].
    \label{eq:ep_gaussian_potential}
\end{align}
Here $\varepsilon$ is dimensionless, while $d$ and $\sigma$ specify the
location and width of the additional potential in the tortoise coordinate. The
factor $M^{-2}$ makes its dimensions identical to those of the
Regge--Wheeler potential.  The model is phenomenological: it is used to deform
the spectral operator and tune its resonances. It must not be confused with
the inhomogeneous source $S_{\ell m}^{\parity}$ on the right-hand side of
Eq.~\eqref{eq:rwz_time_domain}.  In the driven problem, the Gaussian potential
sets the resolvent, whereas $f_{\ell}$ in Eq.~\eqref{eq:impulsive_source}
selects how that resolvent is excited.

Reference~\cite{Yang:2025dbn} adopts $M=1$ and the tortoise-coordinate
convention
\begin{equation}
    \vsub{x}{Y}
    =
    r+2\ln(r-2),
    \label{eq:yang_tortoise}
\end{equation}
with $\vsub{V}{G}=\varepsilon\exp[-(\vsub{x}{Y}-\vsub{d}{Y})^2]$.
This corresponds to $2\sigma^2=1$ in Eq.~\eqref{eq:ep_gaussian_potential}.
For this family, the fundamental mode and first overtone coalesce at the
reported values
\begin{align}
    \varEP{\varepsilon}
    &\simeq
    10^{-2.294}
    \simeq
    5.08\times10^{-3},
    \label{eq:yang_ep_benchmark_epsilon}
    \\
    \vsub{d}{Y,EP}
    &\simeq
    15.698,
    \\
    \varEP{\omega}
    &\simeq
    0.365-0.117\rmi.
    \label{eq:yang_ep_benchmark}
\end{align}
The numerical value of a position in the tortoise coordinates depends on its
additive constant.  Our convention in Eq.~\eqref{eq:tortoise_coordinate}, with
$M=1$, obeys
\begin{equation}
    \vsub{x}{Y}
    =
    x+2\ln2.
\end{equation}
The same physical Gaussian center is therefore
\begin{equation}
    \varEP{d}
    =
    \vsub{d}{Y,EP}-2\ln2
    \simeq
    14.312
    \label{eq:converted_ep_center}
\end{equation}
in our convention.  Stating this shift is necessary when the model is
implemented using a different integration constant for $r_*$.  The benchmark values in Eqs.~\eqref{eq:yang_ep_benchmark_epsilon}--\eqref{eq:yang_ep_benchmark}, together with the converted center in Eq.~\eqref{eq:converted_ep_center}, supply a concrete target for a future finite-basis realization of the
contour-moment construction.

An EP has real codimension two in a generic non-Hermitian family, so the two
controls $(\varepsilon,d)$ are essential.  Encircling the reported point in
this parameter plane exchanges the two local QNM branches after one circuit;
they return to their original labels only after two circuits.  The
single-valued cluster moments in Sec.~\ref{sec:rank_two_riesz_cluster} do not
require a branch label and are therefore the natural variables for describing
this monodromy.  In a finite-basis implementation, the EP location and cluster
moments should also be stable under changes of the scaling angle and basis
parameters.

\subsection{Failure of the simple-pole decomposition}
\label{sec:ep_laurent}

Let $q=(\varepsilon,d)$ and consider the complex-scaled operator
$H_{\ell,\theta}(q)$.  At a second-order exceptional point, two eigenvalues and
their eigenvectors coalesce,
\begin{equation}
    E_+(\varEP{q})
    =
    E_-(\varEP{q})
    =
    \varEP{E},
    \qquad
    \varEP{E}=\varEP{\omega}^2,
    \label{eq:ep_eigenvalue_coalescence}
\end{equation}
and the eigenspace has geometric multiplicity one.  A right Jordan chain is
defined by
\begin{align}
    \left(H_{\ell,\theta}-\varEP{E}\right)
    \ket{\chi_0^R}
    &=0,
    \label{eq:right_jordan_chain_zero}
    \\
    \left(H_{\ell,\theta}-\varEP{E}\right)
    \ket{\chi_1^R}
    &=
    \ket{\chi_0^R},
    \label{eq:right_jordan_chain}
\end{align}
with an analogous left chain.  The analytically continued resolvent remains a well-defined
operator-valued meromorphic function of $z$ near $z=\varEP{E}$.  What fails at the EP is its
representation as a sum of two independent simple-pole projectors.  

To see explicitly how the double pole follows from the Jordan chain, define
the two-dimensional root space
\begin{equation}
    \vsub{\mathcal X}{EP}
    :=
    \operatorname{span}
    \left\{
        \ket{\chi_0^R},
        \ket{\chi_1^R}
    \right\},
\end{equation}
and let $\vsub{I}{EP}$ denote the identity on this space.  The nilpotent
part of the restricted operator is
\begin{equation}
    \vsub{N}{EP}
    :=
    \left.
    \left(
        H_{\ell,\theta}-\varEP{E}
    \right)
    \right|_{\vsub{\mathcal X}{EP}}.
\end{equation}
Equations~\eqref{eq:right_jordan_chain_zero} and
\eqref{eq:right_jordan_chain} imply
\begin{align}
    \vsub{N}{EP}\ket{\chi_0^R}
    &=0,
    &
    \vsub{N}{EP}\ket{\chi_1^R}
    &=
    \ket{\chi_0^R},
\end{align}
and hence
\begin{equation}
    \vsub{N}{EP}^{\,2}=0,
    \qquad
    \vsub{N}{EP}\neq0.
\end{equation}
The resolvent restricted to the root space is therefore
\begin{align}
    \left.
    R_{\ell,\theta}(z)
    \right|_{\vsub{\mathcal X}{EP}}
    &=
    \left[
        (z-\varEP{E})\vsub{I}{EP}
        -\vsub{N}{EP}
    \right]^{-1}
    \\
    &=
    \frac{\vsub{I}{EP}}{z-\varEP{E}}
    +
    \frac{\vsub{N}{EP}}
         {(z-\varEP{E})^2}.
    \label{eq:jordan_resolvent_on_root_space}
\end{align}
The expansion terminates because $\vsub{N}{EP}^{\,2}=0$.  Since the
spectral complement contributes a term analytic at $\varEP{E}$, the full
resolvent consequently has the local Laurent form
\begin{equation}
    R_{\ell,\theta}(z)
    =
    \frac{\Laurent_{-2}}{(z-\varEP{E})^2}
    +
    \frac{\Laurent_{-1}}{z-\varEP{E}}
    +
    \vsub{R}{reg}(z),
    \label{eq:ep_laurent_expansion}
\end{equation}
where $\vsub{R}{reg}$ is analytic at $\varEP{E}$.  The operators
$\Laurent_{-2}$ and $\Laurent_{-1}$ are finite and are independent of an
arbitrary rescaling of the Jordan vectors.

The failure here is a failure of modal coordinates, not of the resolvent away
from its pole.  Off the EP, the two simple eigenvectors provide a basis of the
root space and allow two residues to be quoted separately.  At the EP those
vectors become parallel, so extracting two coefficients in that basis becomes
an ill-conditioned operation.  The root space itself remains two-dimensional
in the algebraic sense, and the two Laurent operators give a finite,
basis-independent description of the operator on that space.

The same conclusion can be seen by approaching the EP along an analytic path
$q(s)$ with $q(0)=\varEP{q}$ and a generic tangent direction.  The two
eigenvalues possess the Puiseux expansions familiar from non-Hermitian
spectral theory~\cite{Moiseyev:2011},
\begin{equation}
    E_{\pm}
    =
    \varEP{E}
    \pm c\sqrt{s}
    +O(s),
    \label{eq:ep_puiseux_eigenvalues}
\end{equation}
while the two separately normalized residues can contain opposite terms of
order $s^{-1/2}$.  Their sum remains finite and tends to
Eq.~\eqref{eq:ep_laurent_expansion}.  This cancellation is displayed directly
by Eq.~\eqref{eq:individual_projectors_from_cluster}.  It is therefore the
cluster moments, rather than either separately normalized residue, that extend
analytically through the EP.  The same observation also motivates a numerical
implementation based on contour moments instead of the subtraction of two
large, normalization-sensitive modal contributions.

\subsection{Jordan-chain interpretation of the Riesz moments}
\label{sec:ep_riesz_projector}

We now relate the moment representation in
Sec.~\ref{sec:rank_two_riesz_cluster} to the conventional Jordan-chain
description.  Following the spectral projection formalism of
Kato~\cite{Kato:1976}, let
$\vsub{\Gamma}{pair}$ be a positively oriented closed contour in the
$z$ plane that encloses the two QNM eigenvalues that form the EP and no other
spectral point of $H_{\ell,\theta}$.  The associated Riesz projector is
the operator~$\vsub{P}{pair}^\theta$ introduced in
Eq.~\eqref{eq:ep_riesz_projector}.  Here
$R_{\ell,\theta}(z)=(z-H_{\ell,\theta})^{-1}$, so that definition agrees
with the standard spectral projection of Kato.
As long as the contour remains in the resolvent set and continues to isolate
the same spectral cluster, $\vsub{P}{pair}^{\theta}$ projects onto the
invariant root subspace associated with the enclosed eigenvalues.  For an
analytic parameter dependence of the operator, this total projector varies
regularly even when the individual eigenvalues and eigenvectors develop the
branch-point structure characteristic of an EP~\cite{Kato:1976}.

Away from the EP, where the two enclosed eigenvalues are simple and the
operator is diagonalizable in this two-mode sector, the Riesz projector
reduces to the sum of the ordinary biorthogonal spectral projectors,
\begin{align}
    \vsub{P}{pair}^{\theta}
    & =
    P_{1}^{\theta}+P_{2}^{\theta},
    \\
    P_{j}^{\theta}
    & =
    \frac{
        \ket{\psi^R_{j,\theta}}\bra{\psi^L_{j,\theta}}
    }{
        \braket{\psi^L_{j,\theta}|\psi^R_{j,\theta}}
    },
    \qquad
    j=1,2.
    \label{eq:ep_pair_projector_sum}
\end{align}
With the biorthogonal convention $\braket{\psi^L_{i,\theta}|\psi^R_{j,\theta}}=\delta_{ij}$, the
denominator in the second line is unity.  Close to the EP, the two projectors
$P_{1}^{\theta}$ and $P_{2}^{\theta}$ are invariant under reciprocal
rescalings of their left and right eigenvectors, but their construction becomes
ill-conditioned and their operator norms may grow without bound.  Their sum
remains the well-defined projector onto the combined spectral subspace.
This cancellation is the operator-level counterpart of the non-divergent
Jordan-block representations used for non-Hermitian systems near EPs~\cite{Hashimoto:2014ep}.

At the EP itself, Eq.~\eqref{eq:ep_pair_projector_sum} must not be interpreted
as a sum of two independent eigenvector projectors.  The eigenspace has
geometric multiplicity one, whereas the algebraic multiplicity of the
coalescing cluster is two.  Accordingly,
$\vsub{P}{pair}^{\theta}$ projects onto the two-dimensional root space
spanned by a Jordan chain,
\begin{equation}
    \operatorname{im}(\vsub{P}{pair}^{\theta})
    =
    \operatorname{span}
    \left\{
        \ket{\chi_0^R},
        \ket{\chi_1^R}
    \right\},
    \label{eq:ep_riesz_root_space}
\end{equation}
where the vectors satisfy Eqs.~\eqref{eq:right_jordan_chain_zero} and \eqref{eq:right_jordan_chain}. Thus the rank of
the Riesz projector continues to count the algebraic multiplicity of the
isolated spectral cluster even though the number of linearly independent
eigenvectors has dropped to one.

For the source problem, the stable quantity corresponding to the combined
``mode-1 plus mode-2'' component is therefore the pair-projected source,
\begin{equation}
    \ket{\vsub{f}{pair}^{\theta}}
    :=
    \vsub{P}{pair}^{\theta}
    U_{\theta}\ket{f}.
    \label{eq:ep_pair_projected_source}
\end{equation}
A frequency-dependent response, however, requires slightly more information
than the projector alone.  The pair resolvent~$\vsub{R}{pair}^\theta$ introduced in
Eq.~\eqref{eq:ep_pair_resolvent} is the restriction of the full resolvent to
the same invariant subspace.  Since the Riesz projector commutes with the
operator, it also obeys
\begin{equation}
    \vsub{R}{pair}^{\theta}(z)
    =
    \vsub{P}{pair}^{\theta}
    R_{\ell,\theta}(z)
    \vsub{P}{pair}^{\theta}.
\end{equation}
Away from the EP this gives
\begin{equation}
    \vsub{R}{pair}^{\theta}(z)
    =
    \frac{P_1^{\theta}}{z-E_1}
    +
    \frac{P_2^{\theta}}{z-E_2},
    \label{eq:ep_pair_resolvent_simple}
\end{equation}
which is precisely the two-mode contribution of the spectral representation
used in Sec.~\ref{sec:complex_scaling}.  At the EP it has the finite Jordan
form
\begin{equation}
    \vsub{R}{pair}^{\theta}(z)
    =
    \frac{\varEP{N}^{\theta}}
         {(z-\varEP{E})^2}
    +
    \frac{\vsub{P}{pair}^{\theta}}
         {z-\varEP{E}},
    \label{eq:ep_pair_resolvent_jordan}
\end{equation}
where
\begin{equation}
    \varEP{N}^{\theta}
    =
    \left(H_{\ell,\theta}-\varEP{E}\right)
    \vsub{P}{pair}^{\theta}
    =
    \vsub{K}{pair}^{\theta}(\varEP{q}),
    \qquad
    \left(\varEP{N}^{\theta}\right)^2=0
    \label{eq:ep_nilpotent_operator}
\end{equation}
within a second-order Jordan block.
Here $\vsub{N}{EP}^{\theta}$ is the full-space embedding of the
restricted nilpotent operator $\vsub{N}{EP}$.  Its restriction to
$\vsub{\mathcal X}{EP}$ agrees with $\vsub{N}{EP}$, while it vanishes on
the complementary Riesz subspace.
Comparison with
Eq.~\eqref{eq:ep_laurent_expansion} identifies
\begin{equation}
    \Laurent_{-1}
    =
    \vsub{P}{pair}^{\theta},
    \qquad
    \Laurent_{-2}
    =
    \varEP{N}^{\theta}.
    \label{eq:ep_riesz_laurent_identification}
\end{equation}
Thus the Jordan form is not an additional construction: it is the EP limit of
the exact two-moment representation.  It avoids assigning separate amplitudes
to the coalescing modes and permits the source-projection analysis of
Sec.~\ref{sec:complex_scaling} to pass through the EP using one invariant
spectral cluster.  In a numerical realization, the same formulation also
avoids resolving two nearly parallel eigenvectors.

\subsection{Generalized amplitudes for a driven response}
\label{sec:ep_generalized_amplitudes}

For a source $f$ and an observation functional $g$, the physical matrix
elements of the two cluster operators are
\begin{align}
    p_{g,f}(q)
    &=
    \braket{
        g
        |
        U_{\theta}^{-1}
        \vsub{P}{pair}^{\theta}(q)
        U_{\theta}
        |
        f
    },
    \\
    k_{g,f}(q)
    &=
    \braket{
        g
        |
        U_{\theta}^{-1}
        \vsub{K}{pair}^{\theta}(q)
        U_{\theta}
        |
        f
    }.
    \label{eq:physical_cluster_matrix_elements}
\end{align}
They are single-valued cluster quantities and do not require the two QNM
branches to be labeled separately.  Proposition~\ref{prop:exact_rank_two_cluster}
then gives
\begin{equation}
    F_{\mathrm{pair}}(z;q)
    =
    \frac{
        [z-\vsub{E}{c}(q)]p_{g,f}(q)
        +
        k_{g,f}(q)
    }{
        [z-\vsub{E}{c}(q)]^{2}
        -
        \Delta(q)^{2}
    }.
    \label{eq:exact_pair_source_response}
\end{equation}
This formula replaces two individually ill-conditioned QNM amplitudes by two
single-valued cluster amplitudes.

At the EP, the two cluster matrix elements become the Laurent coefficients
of the physical response:
\begin{align}
    C_{-1}[g,f]
    &:=
    p_{g,f}(\varEP{q})
    =
    \braket{
        g
        |
        U_{\theta}^{-1}
        \Laurent_{-1}
        U_{\theta}
        |
        f
    },
    \\
    C_{-2}[g,f]
    &:=
    k_{g,f}(\varEP{q})
    =
    \braket{
        g
        |
        U_{\theta}^{-1}
        \Laurent_{-2}
        U_{\theta}
        |
        f
    }.
    \label{eq:generalized_ep_amplitudes}
\end{align}
The singular part of the full physical response is consequently
\begin{equation}
    \braket{g|R_{\ell}(z)|f}
    =
    \frac{C_{-2}[g,f]}{(z-\varEP{E})^2}
    +
    \frac{C_{-1}[g,f]}{z-\varEP{E}}
    +
    \text{regular terms}.
    \label{eq:ep_source_response}
\end{equation}
The coefficients depend on both the source and the observation channel, are
independent of a Jordan-chain normalization, and remain finite at the
coalescence.

There are four related but distinct levels of description.  The operators
$\Laurent_{-2}$ and $\Laurent_{-1}$ are Laurent coefficients of the resolvent
in the $z$ plane.  Their matrix elements are the channel coefficients
$C_{-2}[g,f]$ and $C_{-1}[g,f]$.  Conversion from $z$ to $\omega$ gives
$A_{-2}[g,f]$ and $A_{-1}[g,f]$, and the retarded inverse Fourier transform
then gives the time-domain coefficients $D_1$ and $D_0$.  Each step is fixed
once $H_{\ell}$, $f$, $g$, and the Fourier convention are specified; none of
these quantities is introduced as an independent fit parameter.

For a second-order EP the nonzero nilpotent operator has rank one.  With
compatible left and right Jordan-chain normalizations it may be written as
\begin{equation}
    \varEP{N}^{\theta}
    =
    \ket{\chi_0^R}\bra{\chi_0^L},
\end{equation}
and hence
\begin{equation}
    C_{-2}[g,f]
    =
    \braket{g|U_{\theta}^{-1}|\chi_0^R}
    \braket{\chi_0^L|U_{\theta}|f}.
    \label{eq:double_pole_channel_factorization}
\end{equation}
Although the two factors depend on the reciprocal normalization of the Jordan
vectors, their product does not.  Equation~\eqref{eq:double_pole_channel_factorization}
also makes the selection rule transparent: the double-pole term is absent if
either the source has zero overlap with the left EP eigenvector or the
observation functional annihilates the right EP eigenvector.

In particular,
$C_{-2}[g,f]=0$ means that the leading double-pole contribution is
absent from the specified source-to-observation channel.  Consequently,
the corresponding time-domain response contains no term proportional
to $t\rme^{-\rmi\varEP{\omega}t}$, although a simple-pole
contribution proportional to
$C_{-1}[g,f]\rme^{-\rmi\varEP{\omega}t}$ may still remain.

On the other hand, consider the case that $C_{-2}[g,f]\neq0$.
Because the evolution is Fourier transformed in $\omega$ rather than $z$, it is useful to display the local conversion. With
$z=\omega^2$, $\varEP{E}=\varEP{\omega}^2$ and
$\delta\omega=\omega-\varEP{\omega}$,
\begin{align}
    \frac{1}{(z-\varEP{E})^2}
    & =
    \frac{1}{4\varEP{\omega}^2}
    \frac{1}{(\delta\omega)^2}
    -
    \frac{1}{4\varEP{\omega}^3}
    \frac{1}{\delta\omega}
    +O(1),
    \\
    \frac{1}{z-\varEP{E}}
    & =
    \frac{1}{2\varEP{\omega}}
    \frac{1}{\delta\omega}
    +O(1).
    \label{eq:z_to_omega_ep_conversion}
\end{align}
Thus the two singular coefficients in the $\omega$ plane are
\begin{align}
    A_{-2}[g,f]
    &=
    \frac{C_{-2}[g,f]}{4\varEP{\omega}^{2}},
    \\
    A_{-1}[g,f]
    &=
    \frac{C_{-1}[g,f]}{2\varEP{\omega}}
    -
    \frac{C_{-2}[g,f]}{4\varEP{\omega}^{3}}.
    \label{eq:omega_plane_ep_coefficients}
\end{align}
Closing the retarded Fourier contour in the lower half plane gives the causal
positive-frequency contribution
\begin{equation}
    \varEP{\psi}(t)
    =
    \Theta(t)
    \left(D_0+D_1t\right)
    \rme^{-\rmi\varEP{\omega}t},
    \label{eq:ep_time_dependence}
\end{equation}
where the Fourier convention in Eq.~\eqref{eq:fourier_pair} fixes
\begin{align}
    D_0
    &=
    -\rmi A_{-1}[g,f]
    =
    -\frac{\rmi C_{-1}[g,f]}{2\varEP{\omega}}
    +
    \frac{\rmi C_{-2}[g,f]}{4\varEP{\omega}^{3}},
    \\
    D_1
    &=
    -A_{-2}[g,f]
    =
    -\frac{C_{-2}[g,f]}{4\varEP{\omega}^{2}}.
    \label{eq:time_domain_ep_coefficients}
\end{align}

The origin of the polynomial factor can be seen directly from the residue of
a second-order pole,
\begin{equation}
    \Res_{\omega=\varEP{\omega}}
    \frac{
        \rme^{-\rmi\omega t}
    }{
        (\omega-\varEP{\omega})^2
    }
    =
    -\rmi t\,
    \rme^{-\rmi\varEP{\omega}t}.
    \label{eq:double_pole_fourier_residue}
\end{equation}
The derivative required by the second-order residue produces the factor of
$t$, while closing the retarded contour in the lower half plane supplies the
remaining contour factor and yields $D_1=-A_{-2}$.  The term
$t\rme^{-\rmi\varEP{\omega}t}$ is therefore forced by the double pole rather
than appended to the waveform by hand.

The symmetry-related contribution at $-\varEP{\omega}^*$ must be added when a
real field is reconstructed.  This polynomial modulation is the finite EP limit of the
interference between the two nearby resonances
\cite{Yang:2025dbn,PanossoMacedo:2025xnf,Cheng:2026ep}.
A complementary semi-analytic treatment of nearly double-pole QNMs in the
Nariai limit finds the same transient linear growth and determines
conditions under which it dominates the early ringdown
\cite{Nakamoto:2026lyo}.

The Riesz-moment formulation gives a direct contour prescription for the
Laurent coefficients without constructing an explicitly normalized Jordan
basis.  The coefficient of the simple-pole term is the Riesz projector,
\begin{equation}
    \Laurent_{-1}
    =
    \vsub{P}{pair}^{\theta},
\end{equation}
while the double-pole coefficient requires the first spectral moment as well,
\begin{align}
    \Laurent_{-2}
    &=
    \vsub{\mathcal M}{pair}^{\theta}
    -\varEP{E}\vsub{P}{pair}^{\theta}
    \notag\\
    &=
    \frac{1}{2\pi\rmi}
    \oint_{\vsub{\Gamma}{pair}}
    (z-\varEP{E})
    R_{\ell,\theta}(z)\,\rmd z.
    \label{eq:ep_contour_coefficients}
\end{align}
Thus the projector alone determines $\Laurent_{-1}$, whereas the pair
$\left(\vsub{P}{pair}^{\theta},\vsub{\mathcal M}{pair}^{\theta}\right)$ determines both
Laurent operators and the full rank-two pair resolvent.
In practice, the contour operations may be applied directly to the transformed
source vector.  This yields $\vsub{P}{pair}^{\theta}U_{\theta}f$ and
$\Laurent_{-2}U_{\theta}f$ without resolving two nearly parallel
eigenvectors.  The workflow in Appendix~\ref{app:numerical_realization} is
designed to compare the contour result with the combined two-pole response
away from the EP and verify the smooth limit as
$(\varepsilon,d)$ approaches the reported point.

The Gaussian-potential model also connects the EP analysis to the other
observables considered here.  The pair resolvent used for the scattering
decomposition in Sec.~\ref{sec:pair_resolved_transmission} also determines the
response to the localized impulse.  One can therefore distinguish a large
intrinsic resolvent norm near the EP from a large measured signal, which
additionally requires a nonvanishing channel or source--observer projection.

\subsection{Equivalence with the Jost double-zero expansion}
\label{sec:ep_jost_equivalence}

The same coefficients can be obtained without complex scaling by expanding
the Jost representation locally.  For fixed source and observation states,
Eq.~\eqref{eq:jost_green_function} and the integrations defining the matrix
element have the form
\begin{equation}
    F_{g,f}(\omega)
    :=
    \braket{g|R_{\ell}(\omega^2)|f}
    =
    \frac{\mathcal Q_{g,f}(\omega)}{W_{\ell}(\omega)},
    \label{eq:jost_scalar_response}
\end{equation}
where $\mathcal Q_{g,f}$ contains the Jost-solution numerator and the
source--observation integrations.  It is analytic locally when $f$ and $g$
are admissible and no other singularity is present.

This comparison is more than a consistency check between two numerical
procedures.  The Jost construction imposes the physical radiation conditions
directly, whereas the Riesz construction accesses the same meromorphic scalar
response through a contour of the complex-scaled resolvent.  A meromorphic
function has unique Laurent coefficients at an isolated pole.  Agreement of
the two expressions therefore shows that the contour moments recover
coefficients of the physical source--observer response and not quantities
created by the complex deformation.

\begin{proposition}[Jost and Riesz--Laurent coefficients]
\label{prop:jost_riesz_equivalence}
Suppose that $\varEP{\omega}$ is a second-order zero of the Jost Wronskian,
\begin{equation}
    W_{\ell}(\varEP{\omega})=0,
    \qquad
    W_{\ell}'(\varEP{\omega})=0,
    \qquad
    W_{\ell}''(\varEP{\omega})\neq0.
    \label{eq:jost_double_zero_conditions}
\end{equation}
Then the singular part of Eq.~\eqref{eq:jost_scalar_response} is
\begin{equation}
    F_{g,f}(\omega)
    =
    \frac{A_{-2}^{J}[g,f]}
         {(\omega-\varEP{\omega})^2}
    +
    \frac{A_{-1}^{J}[g,f]}
         {\omega-\varEP{\omega}}
    +O(1),
    \label{eq:jost_omega_laurent_expansion}
\end{equation}
where
\begin{align}
    A_{-2}^{J}[g,f]
    &=
    \frac{
        2\mathcal Q_{g,f}(\varEP{\omega})
    }{
        W_{\ell}''(\varEP{\omega})
    },
    \label{eq:jost_double_pole_coefficient}
    \\
    A_{-1}^{J}[g,f]
    &=
    \frac{
        2\mathcal Q_{g,f}'(\varEP{\omega})
    }{
        W_{\ell}''(\varEP{\omega})
    }
    -
    \frac{
        2\mathcal Q_{g,f}(\varEP{\omega})
        W_{\ell}'''(\varEP{\omega})
    }{
        3\left[W_{\ell}''(\varEP{\omega})\right]^2
    }.
    \label{eq:jost_simple_pole_coefficient}
\end{align}
They coincide with the Riesz coefficients through
\begin{align}
    A_{-2}^{J}[g,f]
    &=
    \frac{C_{-2}[g,f]}{4\varEP{\omega}^{2}},
    \label{eq:jost_riesz_coefficient_matching2}
    \\
    A_{-1}^{J}[g,f]
    &=
    \frac{C_{-1}[g,f]}{2\varEP{\omega}}
    -
    \frac{C_{-2}[g,f]}{4\varEP{\omega}^{3}}.
    \label{eq:jost_riesz_coefficient_matching}
\end{align}
\end{proposition}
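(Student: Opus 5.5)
The proof has two parts: a Taylor expansion of the quotient $\mathcal Q_{g,f}/W_{\ell}$ about the double zero, and a uniqueness argument for Laurent coefficients.

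\emph{Jost coefficients.} Set $\delta\omega=\omega-\varEP{\omega}$. By Eq.~\eqref{eq:jost_double_zero_conditions},
\begin{equation*}
W_{\ell}(\omega)=\tfrac12 W_{\ell}''\,\delta\omega^2\bigl[1+\tfrac{W_{\ell}'''}{3W_{\ell}''}\delta\omega+O(\delta\omega^2)\bigr],
\end{equation*}
with all derivatives evaluated at $\varEP{\omega}$. Hence
\begin{equation*}
1/W_{\ell}=\frac{2}{W_{\ell}''\delta\omega^2}\Bigl[1-\tfrac{W_{\ell}'''}{3W_{\ell}''}\delta\omega+O(\delta\omega^2)\Bigr].
\end{equation*}
The analyticity of $\mathcal Q_{g,f}$ near $\varEP{\omega}$ is assumed in Eq.~\eqref{eq:jost_scalar_response}, so we may multiply by $\mathcal Q_{g,f}(\varEP{\omega})+\mathcal Q_{g,f}'(\varEP{\omega})\delta\omega+O(\delta\omega^2)$. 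Collecting the $\delta\omega^{-2}$ and $\delta\omega^{-1}$ terms gives Eqs.~\eqref{eq:jost_double_pole_coefficient}--\eqref{eq:jost_simple_pole_coefficient} directly, with remainder $O(1)$. The coefficient $W_{\ell}'''/(3W_{\ell}'')$ arises as the ratio of the cubic to the quadratic Taylor coefficients, $(1/6)/(1/2)$.

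\emph{Matching with the Riesz coefficients.} The key point is that both constructions describe the same scalar meromorphic function. By Eq.~\eqref{eq:resolvent_similarity_relation}, $F_{g,f}(\omega)=\braket{g|U_{\theta}^{-1}R_{\ell,\theta}(\omega^2)U_{\theta}|f}$ on the initial resolvent domain. Both sides are meromorphic along a path from the physical sheet to a neighbourhood of $\varEP{\omega}$ within the common analyticity domain, so the identity persists by analytic continuation. This uses the standing assumptions that $f,g$ are dilation-analytic, or that ECS applies, and that $\theta$ exposes $\varEP{E}$.

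The Riesz side then has the $z$-plane expansion in Eq.~\eqref{eq:ep_source_response}. By Proposition~\ref{prop:exact_rank_two_cluster} at $q=\varEP{q}$ together with Eq.~\eqref{eq:ep_riesz_laurent_identification}, its coefficients are $C_{-2}$ and $C_{-1}$, and the remaining terms are regular because $R_{\mathrm{rest}}$ is analytic at $\varEP{E}$. Substituting $z=\omega^2$ and using the local conversion in Eq.~\eqref{eq:z_to_omega_ep_conversion}, where $\varEP{\omega}\neq0$ keeps $z=\omega^2$ locally biholomorphic, gives an $\omega$-plane Laurent expansion. Its coefficients are those in Eq.~\eqref{eq:omega_plane_ep_coefficients}. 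The conversion is routine: $z-\varEP{E}=\delta\omega(2\varEP{\omega}+\delta\omega)$.

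Since a meromorphic function has unique Laurent coefficients at an isolated pole, $A_{-2}^{J}=A_{-2}$ and $A_{-1}^{J}=A_{-1}$. This is exactly Eqs.~\eqref{eq:jost_riesz_coefficient_matching2}--\eqref{eq:jost_riesz_coefficient_matching}. As a by-product, $C_{-2}\neq0$ for a generic channel is consistent with the double zero of $W_{\ell}$.

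\emph{Main obstacle.} I expect the main difficulty to be justifying that the Jost quotient and the back-rotated complex-scaled matrix element are the same analytic function at $\varEP{\omega}$, which lies off the physical sheet's initial domain. I would handle this by the identity theorem, applied along a continuation path avoiding other poles. One also needs that the second-order zero of $W_{\ell}$ corresponds to the double pole produced by the rank-two cluster and not to a higher-order structure. If needed, this follows from uniqueness of Laurent coefficients: the pole orders must agree, and $W_{\ell}''\neq0$ fixes the order at most two.
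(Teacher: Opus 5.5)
Your proposal is correct and follows the same route as the paper. You Taylor-expand $\mathcal Q_{g,f}$ and $W_{\ell}$ about the double zero and divide, then convert the $z$-plane expansion (with coefficients $C_{-2}$, $C_{-1}$) to the $\omega$ plane via $z-\varEP{E}=\delta\omega(2\varEP{\omega}+\delta\omega)$, and conclude by uniqueness of Laurent coefficients. The only difference is that you spell out, via the similarity relation and analytic continuation, why the Jost quotient and the back-rotated complex-scaled matrix element are the same meromorphic function near $\varEP{\omega}$; the paper states this in the prose before the proposition rather than in its proof.
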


\begin{proof}
Expand the two analytic functions as
\begin{align}
    \mathcal Q_{g,f}(\omega)
    &=
    \mathcal Q_{g,f}(\varEP{\omega})
    +
    \mathcal Q_{g,f}'(\varEP{\omega})
    \delta\omega
    +O(\delta\omega^2),
    \\
    W_{\ell}(\omega)
    &=
    \frac{1}{2}W_{\ell}''(\varEP{\omega})
    \delta\omega^2
    +
    \frac{1}{6}W_{\ell}'''(\varEP{\omega})
    \delta\omega^3
    +O(\delta\omega^4),
\end{align}
where $\delta\omega=\omega-\varEP{\omega}$.  Division of the two series gives
Eqs.~\eqref{eq:jost_double_pole_coefficient} and
\eqref{eq:jost_simple_pole_coefficient}.  Equations
\eqref{eq:z_to_omega_ep_conversion} and
\eqref{eq:omega_plane_ep_coefficients} then give
Eqs.~\eqref{eq:jost_riesz_coefficient_matching2}and~\eqref{eq:jost_riesz_coefficient_matching} by uniqueness of the Laurent
expansion.
\end{proof}

The Jost formulas and the Riesz contour moments are therefore two exact
representations of the same source--observer coefficients.  The former uses
derivatives at a double zero; the latter packages the result into finite-rank
operators and applies without choosing a separately normalized pair of QNMs.

\section{Conclusion}
\label{sec:conclusion}

The main result of this work is that the zeroth and first Riesz moments of an
isolated rank-two QNM cluster determine its complete resolvent through a
second-order exceptional point.  The exact formula in
Eq.~\eqref{eq:exact_pair_resolvent} replaces two branch-dependent and
individually ill-conditioned QNM projectors by the single-valued cluster data
$\vsub{P}{pair}^{\theta}$, $\vsub{\mathcal M}{pair}^{\theta}$,
$\vsub{E}{c}$, $\vsub{K}{pair}^{\theta}$, and $\Delta^2$.  At the EP, the
two contour moments give the finite Laurent operators
\begin{equation}
    \Laurent_{-1}
    =
    \vsub{P}{pair}^{\theta},
    \qquad
    \Laurent_{-2}
    =
    \vsub{\mathcal M}{pair}^{\theta}
    -\varEP{E}\vsub{P}{pair}^{\theta}.
\end{equation}
The projector determines the simple-pole coefficient, while the first moment
contains the additional nilpotent information required for the double pole.
This construction uses the invariant root subspace and the restriction of the
operator to it; it neither labels the two modes separately nor requires an
explicitly normalized Jordan chain.

The first physical consequence concerns real-frequency scattering.  Channel
matrix elements of the same cluster formula give an exact pair/rest
decomposition of the transmission amplitude, with the interference term
retained in the greybody factor.  We proved that, as a function of the operator
parameters, the fixed-real-frequency transmission amplitude and greybody
factor have no EP branch singularity as long as the cluster remains isolated,
the complementary resolvent is regular, and no pole reaches the physical axis.
They may nevertheless vary rapidly or become strongly enhanced.  The result
therefore distinguishes a singular choice of modal coordinates from a
singular physical observable.

The second consequence concerns a driven response.  Matrix elements of
$\Laurent_{-1}$ and $\Laurent_{-2}$ between a source $f$ and an observation
functional $g$ define the finite coefficients $C_{-1}[g,f]$ and
$C_{-2}[g,f]$.  They determine both the constant and linear-in-time terms in
the causal EP signal.  In particular, the familiar form
$(D_0+D_1t)\rme^{-\rmi\varEP{\omega}t}$ is obtained from the resolvent rather
than postulated as a fitting template, and the absence of its linear term is a
source--observer selection rule expressed by $C_{-2}[g,f]=0$.  The exact
agreement with the double-zero expansion of the Jost Wronskian shows that
these amplitudes are invariants of the physical channel, not artifacts of
complex scaling or free phenomenological parameters.  For dilation-analytic
source and observation states, their values are independent of the scaling
angle in the exact theory.

We use the Gaussian-deformed $\ell=2$ Regge--Wheeler potential of
Ref.~\cite{Yang:2025dbn} as a concrete model setting.\footnote{%
Localized potential deformations, including Gaussian bumps, have also
been used to diagnose the spectral sensitivity of black-hole QNMs
\cite{Cheung:2021bol}.}  The potential bump
changes the operator, whereas the Gaussian impulse or scattering channel
specifies how its root subspace is probed.  Keeping these roles distinct is
essential to a source-dependent interpretation of EP enhancement.

We have deliberately separated the analytic construction from its numerical
realization.  Numerical calculations remain necessary to assign quantitative
values to a chosen black-hole model and can test the contour implementation
against direct Jost integration, flux conservation, and time evolution.  They
do not, however, define the relevant near-EP variables.  The cluster
invariants, the real-axis regularity statement, and the source--observer
Laurent coefficients follow from the spectral structure of the outgoing
resolvent.  In this sense, the present formulation supplies a constructive
spectral foundation on which numerical evaluation and waveform inference can
subsequently be built.

The present closed form exploits the two-dimensionality of the isolated
root subspace and therefore addresses second-order EPs.
A natural extension is to larger isolated root subspaces.
At a higher-order EP, the restricted operator develops a higher-index
nilpotent part and the resolvent correspondingly contains higher-order
Laurent poles.
Black-hole spectra already exhibit nontrivial multi-mode topology even
without a higher-order coalescence: in Kerr--Newman, loops enclosing
multiple pairwise EPs can induce cyclic permutations among three QNM
branches
\cite{Cavalcante:2026vgr}.
A distinct challenge is posed by resonance--continuum coalescence, for
which complex-scaling studies have shown that resonant and scattering
states can coalesce at an EP as the resonance pole becomes embedded in the
continuum
\cite{Morikawa:2025inx}.
Developing a resolvent-based response construction that accommodates both
higher-rank isolated clusters and resonance--continuum coalescence is an
important direction for future work.

\section*{Note added}
After the completion and initial posting of this manuscript, we became aware of the independent and closely contemporaneous work of Ould~El~Hadj and Dolan~\cite{OuldElHadj:2026vym}, which studies exceptional points in the Schwarzschild quasinormal-mode spectrum generated by localized perturbations of the Regge--Wheeler potential. Using an exactly tractable delta-function defect, together with its finite-width Gaussian continuation, Ref.~\cite{OuldElHadj:2026vym} identifies infinite families of exceptional points, derives their large-distance asymptotic organization, and exhibits spectral bridges connecting neighboring Schwarzschild overtones. It also gives a detailed account of the associated square-root splitting, avoided crossings, Riemann-sheet exchange, excitation-factor enhancement, and the resulting time-domain response.

The two works address complementary aspects of the same non-Hermitian spectral structure. Ref.~\cite{OuldElHadj:2026vym} focuses primarily on how localized perturbations generate, organize, and connect exceptional points in the Schwarzschild spectrum. The present work instead focuses on how the resolvent and physical response should be represented through such a coalescence. In particular, we treat the coalescing modes as a single isolated rank-two Riesz cluster and show that its first two contour moments determine a single-valued pair resolvent and, at the exceptional point, the corresponding simple- and double-pole Laurent operators. This formulation avoids resolving or labeling the individual modes and directly yields real-frequency scattering and finite source--observer amplitudes across the exceptional point.

There is also a direct structural connection between the two descriptions. The square-root unfolding of the resonance frequencies found in Ref.~\cite{OuldElHadj:2026vym} is the local two-sheeted manifestation of the same discriminant structure that is encoded in the present formulation by the single-valued cluster invariant $\Delta^{2}$, with the two individual spectral branches recovered only after choosing a local square root. Likewise, Ref.~\cite{OuldElHadj:2026vym} shows that the divergent difference of the two excitation factors is compensated by the vanishing frequency separation, so that their product has a finite and sheet-independent exceptional-point limit. In the present operator formulation, the corresponding regular information is retained directly in the finite double-pole Laurent operator; after projection onto a specified source--observer channel, it determines the coefficient of the term linear in time in the exceptional-point ringdown. Thus the spectral geometry and residue analysis of Ref.~\cite{OuldElHadj:2026vym} and the Riesz--Laurent resolvent construction developed here provide independent and complementary descriptions of the same exceptional-point physics.

\section*{Acknowledgements}

This work was partially supported by Japan Society for the Promotion of Science (JSPS) Grant-in-Aid for Scientific Research Grant Numbers JP25K17402 (O.M.).
This work was supported in part by the COREnet project of RCNP, The University of Osaka (No.\ COREnet065 [O.M.\ and S.O.]).
O.M.\ acknowledges the RIKEN Special Postdoctoral Researcher Program and RIKEN FY2025 Incentive Research Projects.

\appendix

\section{Numerical realization of the contour-moment construction}
\label{app:numerical_realization}

This Appendix specifies a direct route from the analytic quantities in the
main text to a finite-dimensional calculation.  It is not a report of new
numerical data.  The guiding principle is to compute contour moments from
shifted linear systems, rather than to subtract the residues of two nearly
parallel eigenvectors.  The existing complex-scaling implementation of
Ref.~\cite{Ogawa:2026veu} and the Julia package
\texttt{CSMQNM.jl}~\cite{Morikawa:2026CSMQNM} provide the basis and ECS
discretizations, while \texttt{AutoTortoise.jl}~\cite{Morikawa:2026AutoTortoise}
provides the tortoise map and its inverse.  The additional operation needed
here is a quadrature loop around the isolated rank-two cluster.

At a high level, the numerical workflow is as follows.  One first assembles a
single matrix pencil for the complex-scaled operator and uses a preliminary
spectrum only to choose a contour.  Shifted linear systems on that contour
then produce the zeroth and first moments.  These moments determine the
cluster invariants and supply residual tests without requiring individual
eigenvector normalization.  After the EP is located from the zero of
$\Delta^2$, the same shifted solves are contracted with source and observation
vectors to obtain the Laurent and scattering coefficients.  Direct Jost
integration and time evolution are independent validations of the result,
not inputs used to define it.  The following subsections spell out each of
these steps.

\subsection{Tortoise map and continuation along the scaled contour}

For a static metric with
\begin{equation}
    F(r)
    =
    \sum_n a_n r^n,
\end{equation}
\texttt{AutoTortoise.jl} constructs the horizons, selects a static radial
interval, and represents
\begin{equation}
    x(r)
    =
    \int^r\frac{\rmd \bar r}{F(\bar r)}
\end{equation}
by partial fractions.  The same additive constant must be used in the
potential deformation, the source location, and the matching surfaces.  For
Schwarzschild, the numerical inverse should first be checked against
\begin{equation}
    r(x)
    =
    2M
    \left[
        1+W_0\!\left(
            \exp\!\left[
                \frac{x}{2M}-1
            \right]
        \right)
    \right],
    \label{eq:appendix_schwarzschild_inverse}
\end{equation}
where $W_0$ is the principal Lambert function on the physical real interval.

On a global or exterior-scaled contour, write
\begin{equation}
    \zeta
    =
    \zeta_{\theta}(x),
    \qquad
    J(x)
    =
    \frac{\rmd \zeta_{\theta}}{\rmd x}.
\end{equation}
The complex inverse $r(\zeta)$ should be generated by continuation from the
undeformed region.  Starting at a real grid point, one propagates separately
toward the left and right asymptotic segments and uses the preceding value of
$r$ as the initial guess at the next point.  This is more reliable than
starting an independent Newton iteration on the principal logarithmic branch
at every grid point.  Two useful residuals are
\begin{align}
    \epsilon_{x}
    &:=
    \max_j
    \frac{
        \left|x(r_j)-\zeta_j\right|
    }{
        1+|\zeta_j|
    },
    \label{eq:appendix_inverse_residuals_x}
    \\
    \epsilon_{J}
    &:=
    \max_j
    \frac{
        \left|
            \frac{\rmd r_j}{\rmd x}
            -F(r_j)J(x_j)
        \right|
    }{
        1+\left|F(r_j)J(x_j)\right|
    }.
    \label{eq:appendix_inverse_residuals}
\end{align}
They detect a wrong logarithmic branch before it contaminates the potential
matrix.  The contour must also be inspected to ensure that it crosses no
horizon, singularity of the continued potential, or branch cut of the inverse
map.  The Gaussian deformation is evaluated at the same complex coordinate,
\begin{equation}
    \vsub{V}{G}(\zeta)
    =
    \frac{\varepsilon}{M^2}
    \exp\!\left[
        -\frac{(\zeta-d)^2}{2\sigma^2}
    \right].
\end{equation}

\subsection{Choice of discretization and the matrix pencil}

The polynomial--Gaussian implementation of
Ref.~\cite{Ogawa:2026veu} is a convenient first realization of global
scaling.  For real Gaussian ranges its matrix elements require
$\theta<\pi/4$.  The EP frequency in Eq.~\eqref{eq:yang_ep_benchmark} has the
exposure angle
\begin{equation}
    \theta_{\mathrm{exp}}
    =
    \arctan\!\left(
        -\frac{\im\varEP{\omega}}
        {\re\varEP{\omega}}
    \right)
    \simeq
    17.8^{\circ},
\end{equation}
so an initial scan between approximately $25^{\circ}$ and $40^{\circ}$ is
compatible with both conditions, subject to the analyticity of $r(\zeta)$.
The polynomial order and Gaussian ranges should then be varied around the
benchmark choices of Ref.~\cite{Ogawa:2026veu}.

For localized sources, interior observation functionals, and channel
matching, ECS is usually preferable because the interaction region remains
real.  The currently experimental finite-element DVR backend of
\texttt{CSMQNM.jl} represents the
weak form by
\begin{align}
    \left(\mathbf H_{\theta}\right)_{ab}
    &=
    \int\rmd x\,
    \left[
        J^{-1}B_a'B_b'
        +JB_aV(\zeta_{\theta}(x))B_b
    \right],
    \\
    \left(\mathbf N\right)_{ab}
    &=
    \int\rmd x\,
    J B_aB_b,
    \label{eq:appendix_ecs_weak_matrices}
\end{align}
with no complex conjugation in the bilinear products.  The bridge functions
enforce continuity across element boundaries, and the outer endpoint degrees
of freedom impose homogeneous Dirichlet conditions after the outgoing waves
have been damped on the exterior rays.  A finite-difference ECS matrix can
also be used, with $\mathbf N=\mathbf I$, but the conservative flux form is
preferable to a product of first-derivative matrices.

All subsequent steps require a consistent matrix pencil
\begin{equation}
    \mathbf H_{\theta}(q)\mathbf c
    =
    E\mathbf N\mathbf c.
    \label{eq:appendix_matrix_pencil}
\end{equation}
If an overlap-conditioning transformation has been applied, both matrices
must be returned in the same reduced basis.  An eigenvalue-only interface is
insufficient for the present calculation.  In particular, a practical
extension of \texttt{CSMQNM.jl} should expose the conditioned pair
$(\mathbf H_{\theta},\mathbf N)$, together with the basis transformation and
load-vector assembly.  For an orthonormalized real-range calculation one may
simply use the transformed Hamiltonian and $\mathbf N=\mathbf I$.

\subsection{Contour quadrature for the two Riesz moments}

In a nonorthogonal basis, the coefficient-space resolvent corresponding to
Eq.~\eqref{eq:appendix_matrix_pencil} is
\begin{equation}
    \mathbf G_{\theta}(z;q)
    =
    \left[
        z\mathbf N-\mathbf H_{\theta}(q)
    \right]^{-1}
    \mathbf N.
    \label{eq:appendix_pencil_resolvent}
\end{equation}
The factor $\mathbf N$ on the right is essential: the bare inverse of the
pencil maps a Galerkin load vector to coefficients, whereas
$\mathbf G_{\theta}$ represents the resolvent as an operator on coefficient
vectors.

After a preliminary spectrum identifies the pair, choose a circular contour
with center $z_0$ and radius $\rho$ that encloses the two target eigenvalues
but neither a third resonance nor a rotated-continuum pseudostate.  For
$N_{\Gamma}$ midpoint trapezoidal nodes, define
\begin{align}
    \varphi_j
    &=
    \frac{2\pi}{N_{\Gamma}}
    \left(j+\frac{1}{2}\right),
    \\
    z_j
    &=
    z_0+\rho\rme^{\rmi\varphi_j},
    \qquad
    j=0,\ldots,N_{\Gamma}-1.
\end{align}
The two matrix moments are then approximated by
\begin{align}
    \mathbf P_{\Gamma}
    &:=
    \frac{\rho}{N_{\Gamma}}
    \sum_{j=0}^{N_{\Gamma}-1}
    \rme^{\rmi\varphi_j}
    \mathbf G_{\theta}(z_j;q),
    \\
    \boldsymbol{\mathcal M}_{\Gamma}
    &:=
    \frac{\rho}{N_{\Gamma}}
    \sum_{j=0}^{N_{\Gamma}-1}
    z_j\rme^{\rmi\varphi_j}
    \mathbf G_{\theta}(z_j;q).
    \label{eq:appendix_discrete_moments}
\end{align}
Here $\mathbf P_{\Gamma}$ and $\boldsymbol{\mathcal M}_{\Gamma}$ are the matrix
representatives of $\vsub{P}{pair}^{\theta}$ and
$\vsub{\mathcal M}{pair}^{\theta}$, respectively.  At each node, factorize
$z_j\mathbf N-\mathbf H_{\theta}$ once and reuse that factorization for all
right-hand sides.  The nodes are independent and can be distributed across
threads or processes.  For the matrix dimensions used in
Ref.~\cite{Ogawa:2026veu}, forming the full moments is feasible; sparse shifted
solves are preferable for a large FEDVR basis.

When only a small invariant subspace is needed, the full matrices need not be
formed.  Let $\mathbf V_{\mathrm{probe}}$ contain at least two generic probing
vectors and accumulate
\begin{equation}
    \mathbf S_0
    =
    \mathbf P_{\Gamma}\mathbf V_{\mathrm{probe}},
    \qquad
    \mathbf S_1
    =
    \boldsymbol{\mathcal M}_{\Gamma}\mathbf V_{\mathrm{probe}}
\end{equation}
by solving the shifted systems with right-hand side
$\mathbf N\mathbf V_{\mathrm{probe}}$.  If the rank-two truncated singular
value decomposition is
\begin{equation}
    \mathbf S_0
    =
    \mathbf U_2\mathbf\Sigma_2\mathbf V_2^{\dagger},
\end{equation}
then the restricted spectral operator is represented by
\begin{equation}
    \mathbf B_{\mathrm{red}}
    =
    \mathbf U_2^{\dagger}
    \mathbf S_1
    \mathbf V_2
    \mathbf\Sigma_2^{-1}.
    \label{eq:appendix_reduced_cluster_matrix}
\end{equation}
Its trace and traceless part give the same $E_c$ and $\Delta^2$ as the full
moments.  The Hermitian singular value decomposition here is only a stable
numerical device for extracting the range; it does not replace the physical
$c$ product.

For a full-matrix calculation, define
\begin{align}
    E_c
    &=
    \frac{1}{2}\Tr\boldsymbol{\mathcal M}_{\Gamma},
    \\
    \mathbf K_{\Gamma}
    &=
    \boldsymbol{\mathcal M}_{\Gamma}-E_c\mathbf P_{\Gamma},
    \\
    \Delta^2
    &=
    \frac{1}{2}\Tr\left(\mathbf K_{\Gamma}^2\right).
    \label{eq:appendix_discrete_invariants}
\end{align}
Useful internal residuals are
\begin{align}
    \epsilon_P
    &:=
    \frac{
        \left\|\mathbf P_{\Gamma}^2-\mathbf P_{\Gamma}\right\|
    }{
        \left\|\mathbf P_{\Gamma}\right\|
    },
    \label{eq:appendix_moment_residuals_P}
    \\
    \epsilon_{\mathcal M}
    &:=
    \frac{
        \left\|
            \mathbf H_{\theta}\mathbf P_{\Gamma}
            -\mathbf N\boldsymbol{\mathcal M}_{\Gamma}
        \right\|
    }{
        \left\|\mathbf H_{\theta}\right\|
        \left\|\mathbf P_{\Gamma}\right\|
        +
        \left\|\mathbf N\right\|
        \left\|\boldsymbol{\mathcal M}_{\Gamma}\right\|
    },
    \\
    \epsilon_{\mathrm{CH}}
    &:=
    \frac{
        \left\|
            \mathbf K_{\Gamma}^2
            -\Delta^2\mathbf P_{\Gamma}
        \right\|
    }{
        \left\|\mathbf K_{\Gamma}\right\|^2
        +|\Delta^2|
        \left\|\mathbf P_{\Gamma}\right\|
    }.
    \label{eq:appendix_moment_residuals}
\end{align}
One should additionally verify $\Tr\mathbf P_{\Gamma}\simeq2$ and stability
under doubling $N_{\Gamma}$.  The smallest singular value of
$z_j\mathbf N-\mathbf H_{\theta}$ along the contour provides a direct warning
that the contour has been placed too close to an enclosed or excluded
spectral point.

Numerically, this is a contour-integral spectral-projection construction
of the type developed in Refs.~\cite{Beyn:2012,SAKURAI2003119},
specialized here to the first two Riesz moments of the complex-scaled
black-hole pencil.

\subsection{Locating and verifying the exceptional point}

For the model in Sec.~\ref{sec:gaussian_bump_ep}, begin near
\begin{equation}
    q
    =
    (\varepsilon,d),
    \qquad
    \sigma=\frac{1}{\sqrt{2}},
    \qquad
    M=1,
\end{equation}
using the coordinate-converted value in
Eq.~\eqref{eq:converted_ep_center}.  A scaling-angle scan is useful for
identifying an isolated pair and selecting the contour, but the final EP
search need not track either eigenvalue label.  Instead solve the two real
equations
\begin{equation}
    \mathbf F_{\mathrm{EP}}(\varepsilon,d)
    :=
    \begin{pmatrix}
        \re\!\left[\Delta(\varepsilon,d)^2\right]
        \\
        \im\!\left[\Delta(\varepsilon,d)^2\right]
    \end{pmatrix}
    =
    \begin{pmatrix}
        0
        \\
        0
    \end{pmatrix}.
    \label{eq:appendix_ep_root_problem}
\end{equation}
A central finite-difference Jacobian followed by Newton or Broyden iteration
is adequate, provided that the same contour continues to isolate the same
rank-two cluster at every trial point.  This formulation searches for the
zero of a single-valued cluster invariant rather than the crossing of two
branch-labeled QNM frequencies.

A zero of $\Delta^2$ is a repeated eigenvalue in the reduced two-dimensional
problem.  To establish that it is an EP rather than a semisimple degeneracy,
one must also check
\begin{equation}
    \left\|\mathbf K_{\Gamma}\right\|>0,
    \qquad
    \left\|\mathbf K_{\Gamma}^2\right\|\simeq0,
    \qquad
    \rank\mathbf P_{\Gamma}=2.
    \label{eq:appendix_ep_verification}
\end{equation}
Equivalently, at $E=E_c$ the pencil
$\mathbf H_{\theta}-E_c\mathbf N$ should have numerical nullity one within
the isolated root subspace.  Encircling the solution in the
$(\varepsilon,d)$ plane should exchange the two eigenvalue branches, while
$\mathbf P_{\Gamma}$, $E_c$, and $\Delta^2$ return after one circuit.

\subsection{Source amplitudes and real-frequency scattering}

Let $\mathbf b_f$ be the Galerkin load vector of the transformed source and
let $\vsub{\mathbf d}{g}$ be the row that evaluates the chosen observation
functional on a coefficient vector.  At the contour nodes solve
\begin{equation}
    \left(
        z_j\mathbf N-\mathbf H_{\theta}
    \right)
    \mathbf y_j
    =
    \mathbf b_f
\end{equation}
and form the scalar moments
\begin{align}
    p_{g,f}
    &\simeq
    \frac{\rho}{N_{\Gamma}}
    \sum_j
    \rme^{\rmi\varphi_j}
    \vsub{\mathbf d}{g}\mathbf y_j,
    \label{eq:appendix_scalar_projector_moment}
    \\
    m_{g,f}
    &\simeq
    \frac{\rho}{N_{\Gamma}}
    \sum_j
    z_j\rme^{\rmi\varphi_j}
    \vsub{\mathbf d}{g}\mathbf y_j,
    \label{eq:appendix_scalar_first_moment}
    \\
    k_{g,f}
    &=
    m_{g,f}-E_c p_{g,f}.
    \label{eq:appendix_scalar_traceless_moment}
\end{align}
At the EP these give $C_{-1}[g,f]=p_{g,f}$ and
$C_{-2}[g,f]=k_{g,f}$ directly.  The row $\vsub{\mathbf d}{g}$ is assembled
from the physical observation functional; it need not be the Hermitian
conjugate of a right-state coefficient vector when a complex-symmetric
$c$ product is used.  The same shifted factorizations can be reused
for many sources and observers.  If only the pair-projected source is needed,
accumulating the vectors $\mathbf y_j$ before applying
$\vsub{\mathbf d}{g}$ gives the actions of the two Laurent operators without
forming either full matrix.

For a point or narrow observation profile, ECS should place its support in the
undeformed interval.  Direct point evaluation of a globally back-rotated
resonant eigenfunction is exponentially ill-conditioned and is unnecessary
for the contour-moment calculation.  A localized test function provides an
additional regularization and may be narrowed only after basis convergence
has been established.

For scattering at a real frequency, assemble the channel load
$\mathbf b_{\omega}$ for
$U_{\theta}\mathsf V_{\ell}\ket{-\omega}$ and the corresponding observation
row $\vsub{\mathbf d}{\omega}$.  If
\begin{equation}
    v_{\omega}
    :=
    \braket{-\omega|\mathsf V_{\ell}|-\omega},
\end{equation}
then a direct shifted solve at $z_{\omega}=(\omega+\rmi0)^2$ gives
\begin{equation}
    \mathcal T_{\ell}(\omega)
    =
    1+
    \frac{1}{2\rmi\omega}
    \left[
        v_{\omega}
        +
        \vsub{\mathbf d}{\omega}
        \left(
            z_{\omega}\mathbf N-\mathbf H_{\theta}
        \right)^{-1}
        \mathbf b_{\omega}
    \right].
    \label{eq:appendix_direct_transmission}
\end{equation}
The $\rmi0$ prescription can be implemented either by a small positive
imaginary part followed by a stability check or by ECS matching in the
undeformed region.  Applying
Eqs.~\eqref{eq:appendix_scalar_projector_moment} and
\eqref{eq:appendix_scalar_first_moment} to the same channel load and
row gives
\begin{equation}
    \vsub{\tau}{P}
    =
    \frac{p_{\omega}}{2\rmi\omega},
    \qquad
    \vsub{\tau}{K}
    =
    \frac{m_{\omega}-E_c p_{\omega}}{2\rmi\omega},
\end{equation}
which can be inserted into Eq.~\eqref{eq:exact_pair_transmission}.  The
complementary amplitude is most simply obtained from
\begin{equation}
    \mathcal T_{\ell,\mathrm{rest}}
    =
    \mathcal T_{\ell}
    -\mathcal T_{\ell,\mathrm{pair}}.
\end{equation}
This subtraction is between two finite channel amplitudes and is distinct
from subtracting two divergent QNM residues.

An independent real-axis Jost integration is the most stringent benchmark.
With the normalizations of Sec.~\ref{sec:outgoing_resolvent},
\begin{equation}
    W_{\ell}(\omega)
    =
    2\rmi\omega\vsup{A}{in}_{\ell}(\omega),
    \qquad
    \mathcal T_{\ell}(\omega)
    =
    \frac{2\rmi\omega}{W_{\ell}(\omega)}.
\end{equation}
Agreement with Eq.~\eqref{eq:appendix_direct_transmission}, together with
$|\mathcal R_{\ell}|^2+|\mathcal T_{\ell}|^2=1$, tests the potential
continuation, channel normalization, and resolvent extraction independently
of the Riesz decomposition.

\subsection{Convergence and error budget}

The calculation should report stability of the cluster quantities, not only
of the two eigenvalues.  At minimum, one should vary the scaling angle, basis
dimension and ranges, overlap cutoff, integration box, ECS onset and smoothing,
contour center and radius, and $N_{\Gamma}$.  The inverse-map residuals in
Eqs.~\eqref{eq:appendix_inverse_residuals_x} and~\eqref{eq:appendix_inverse_residuals}, the moment residuals in
Eqs.~\eqref{eq:appendix_moment_residuals_P}--\eqref{eq:appendix_moment_residuals}, and the distance of the contour from
the pencil spectrum should be recorded separately.  The final values of
$E_c$, $\Delta^2$, $p_{g,f}$, and $k_{g,f}$ should be insensitive to
$\theta$ within the common convergence window, although the matrix
representatives before back transformation need not be.

Near the EP, the attainable accuracy is set by the larger of the discretization
error and the shifted-solve error.  Iterating the parameter root finder beyond
that floor only produces spurious digits.  If double precision saturates
before the required accuracy is reached, the overlap conditioning, tortoise
inversion, matrix assembly, and shifted solves must all be promoted together;
increasing precision only in the final eigensolver is ineffective.  Finally,
the time-domain coefficients obtained from $C_{-1}$ and $C_{-2}$ should be
checked against a direct inverse Fourier transform or time evolution over a
window in which the continuum discretization is converged.  These tests turn
the analytic identities of the main text into a closed and falsifiable
numerical protocol.

\section{The Aguilar--Balslev--Combes theorem and scaling-angle independence}\label{sec:abc}

In the context of complex scaling, the Aguilar--Balslev--Combes theorem is sometimes quoted only as the rule that
the continuum rotates.  Its real content is a statement about an analytic
family of closed operators and the meromorphic continuation of resolvent
matrix elements~\cite{Aguilar:1971ve,Balslev:1971vb,Simon1972BalslevCombes,Simon:73}.

\subsection{Analytic dilation}

Let the configuration-space dimension be $d$.  For a real dilation parameter
$s$, define the unitary operator
\begin{equation}
  \left[\mathcal U(s)\psi\right](\bm x)
  =\rme^{ds/2}\psi(\rme^s\bm x) .
  \label{eq:real-dilation}
\end{equation}
Complex scaling analytically continues $s$ to $s=\rmi\theta$, with a real
angle $\theta$.  Formally,
\begin{equation}
  \left[\mathcal U(\rmi\theta)\psi\right](\bm x)
  =\rme^{\rmi d\theta/2}
  \psi(\rme^{\rmi\theta}\bm x) .
  \label{eq:complex-dilation}
\end{equation}
For $\theta\ne0$, this transformation is not unitary on the original
$L^2$ space and is generally unbounded.  It should be defined first on a dense set of analytic vectors.  
This is why it is safer to speak of an analytic family of deformed operators than of an everywhere-defined similarity transformation.

For the Hamiltonian
\begin{equation}
  H=T+V(\bm x),
  \qquad
  T=-\frac{\hbar^2}{2m}\nabla^2,
  \label{eq:H-TV}
\end{equation}
the complex-scaled operator is
\begin{equation}
  H_{\theta}
  =\mathcal U(\rmi\theta)H\mathcal U(-\rmi\theta)
  =\rme^{-2\rmi\theta}T
  +V(\rme^{\rmi\theta}\bm x) .
  \label{eq:H-theta}
\end{equation}
Equation~\eqref{eq:H-theta} requires the potential to admit analytic
continuation in the sector swept out by the deformation.

\subsection{Outgoing wave becomes square integrable}
Write the resonance momentum as
\begin{equation}
  \vsub{k}{R}=|\vsub{k}{R}|\rme^{-\rmi\varphi},
  \qquad 0<\varphi<\frac{\pi}{4} .
  \label{eq:k-polar}
\end{equation}
Equivalently, the resonance energy
$E_R=\hbar^2 k_R^2/(2m)$ has argument $-2\varphi$; hence the
positive-energy resonance sector $-\pi/2<\arg E_R<0$ corresponds to
$0<\varphi<\pi/4$ in the momentum plane.

On the positive half-line, the scaled outgoing wave is
\begin{align}
  \rme^{\rmi \vsub{k}{R}r\rme^{\rmi\theta}}
  &=\rme^{
  \rmi|\vsub{k}{R}|r\cos(\theta-\varphi)
  }
  \notag\\
  &\quad\times
  \rme^{-|\vsub{k}{R}|r
  \sin(\theta-\varphi)} .
  \label{eq:scaled-outgoing}
\end{align}
It decays for $\theta>\varphi$.  
The same condition regularizes the outgoing wave on the negative half-line when the contour is rotated consistently.
Thus the Siegert state becomes an $L^2$ eigenfunction of $H_\theta$ once the rotated continuum has passed below the pole.  
The divergent unscaled wave and the normalizable scaled wave are analytic continuations of one solution, 
not different physical states.

\subsection{The Aguilar--Balslev--Combes theorem}
The main statement is as follows:
\begin{theorem}[Aguilar--Balslev--Combes, type-A one-threshold form]
Let $H_0=-\hbar^2\nabla^2/(2m)$ on $L^2(\bbR^d)$ with domain $H^2(\bbR^d)$,
and let $H=H_0+V$ be self-adjoint.  Suppose there is $\theta_0>0$ such that
for $|\im s|<\theta_0$ the dilation
$V_s=\mathcal U(s)V\mathcal U(-s)$ is analytic as an operator from the domain $D(H_0)$ with its graph norm
to $L^2$, and $V_s(H_0+1)^{-1}$ is compact locally uniformly in $s$.  Assume
also that $H(s)=\rme^{-2s}H_0+V_s$ is a closed type-A family on the common
domain $D(H_0)$.

For $0<\theta<\theta_0$, put $H_\theta=H(\rmi\theta)$.  Then
\begin{enumerate}[label=(\roman*)]
  \item the essential spectrum of $H_\theta$ is given by $\rme^{-2\rmi\theta}[0,\infty)$;
  \item the remaining spectrum away from this ray consists of isolated
  eigenvalues of finite algebraic multiplicity;
  \item matrix elements $\braket{{\phi},{(z-H)^{-1}\psi}}$ ($\phi$, $\psi\in\mathcal A$, where $\mathcal A$ is a dense set of dilation-analytic vectors) between
  dilation-analytic vectors continue meromorphically from the physical
  resolvent set into the wedge swept by the ray, and their poles can occur only at discrete eigenvalues of $H_\theta$ in this wedge; conversely, each such discrete eigenvalue is detected as a pole for suitable dilation-analytic vectors; and
  \item these eigenvalues and their algebraic multiplicities are independent
  of $\theta$ while they remain in a common uncovered wedge.  Negative
  isolated eigenvalues of $H$ remain unchanged.
\end{enumerate}
\label{thm:abc-type-A}
\end{theorem}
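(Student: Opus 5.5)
This theorem is the Aguilar--Balslev--Combes result in its type-A form. I would prove it in the standard order: analytic-family structure, then the essential spectrum, then meromorphic continuation of matrix elements, and finally $\theta$-independence via a unitary-dilation argument.

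\emph{Analytic family and the one-parameter group.} First I would show that $s\mapsto H(s)=\rme^{-2s}H_0+V_s$ is a holomorphic type-A family on $D(H_0)$ for $|\im s|<\theta_0$. This follows from the assumed analyticity of $V_s$ as a map from $(D(H_0),\text{graph norm})$ to $L^2$. The key covariance identity is
\[
H(s+\tau)=\mathcal U(\tau)H(s)\,\mathcal U(-\tau), \qquad \tau\in\bbR .
\]
It holds for $s$ real by direct computation, and for complex $s$ by uniqueness of analytic continuation, since both sides are analytic in $s$. Consequently the spectrum of $H(s)$ depends only on $\im s$, because real translations act by unitary equivalence.

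\emph{Parts (i) and (ii).} Write
\[
z-H_\theta=\bigl(1-V_{\rmi\theta}(z-\rme^{-2\rmi\theta}H_0)^{-1}\bigr)\bigl(z-\rme^{-2\rmi\theta}H_0\bigr).
\]
The free part has spectrum $\rme^{-2\rmi\theta}[0,\infty)$. Local compactness of $V_s(H_0+1)^{-1}$ makes $V_{\rmi\theta}(z-\rme^{-2\rmi\theta}H_0)^{-1}$ an analytic compact-operator family on the complement of that ray. Weyl's theorem gives the essential spectrum in (i). The analytic Fredholm alternative then shows that the rest of the spectrum consists of isolated points with finite-rank Riesz projections, which is (ii). The analytic Fredholm alternative requires invertibility at one point of each component; I would use large $|z|$, where the Neumann series converges because $\|V(z-H_0)^{-1}\|\to0$.

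\emph{Parts (iii) and (iv).} For $\phi,\psi\in\mathcal A$ and real $\tau$, unitarity gives
\[
\braket{\phi,(z-H)^{-1}\psi}=\braket{\mathcal U(\tau)\phi,(z-H(\tau))^{-1}\mathcal U(\tau)\psi}
\]
for $\im z\neq0$. The right side is analytic in $\tau$, so it continues to $\tau=\rmi\theta$. Here one pairs $\mathcal U(-\bar{\rmi\theta})\phi$ against $(z-H_\theta)^{-1}\mathcal U(\rmi\theta)\psi$, and the continued expression is constant in $\tau$ because it is constant for real $\tau$. This constancy is the main obstacle: I must justify that the continued function is simultaneously analytic in $\tau$ and in $z$ on a connected domain. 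I would handle it by considering $z$ in the physical half-plane, which lies in the resolvent set of every $H(\tau)$ with $|\im\tau|<\theta$ small, and then applying the identity theorem in two variables.

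The resulting formula is meromorphic in $z$ off the rotated ray, by (ii), and provides the continuation across the positive axis into the uncovered wedge. Its poles are contained in the discrete spectrum of $H_\theta$. For the converse, density of $\mathcal A$ gives some pair whose matrix element of the Riesz projection is nonzero, so each such eigenvalue is detected as a pole.

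For (iv), the matrix elements are independent of $\theta$ on the common uncovered wedge. Their poles, and the ranks of the residue projections tested against the dense set, therefore coincide for different angles. This yields independence of both the eigenvalues and their algebraic multiplicities. Negative eigenvalues are never covered by the rotated ray, so they are detected at $\theta=0$ and remain unchanged.
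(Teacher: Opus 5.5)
The paper gives no proof of this theorem; it states it and refers to the cited works of Aguilar, Balslev, Combes and Simon. Your outline follows the standard route from that literature. The covariance identity and parts (i)--(ii) are fine: the factorization of $z-H_\theta$, Weyl's theorem, and the analytic Fredholm alternative in $z$ all work as you describe.

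The gap is at the step you yourself flag. You want to continue $F(\tau)=\langle\mathcal U(\overline{\tau})\phi,(z-H(\tau))^{-1}\mathcal U(\tau)\psi\rangle$, at fixed $z$ with $\im z>0$, from real $\tau$ to $\tau=\rmi\theta$. For that you need $z$ to lie in the resolvent set of $H(\rmi t)$ for every $t\in[0,\theta]$, and you only assert this. Read literally (the whole upper half-plane, all $|\im\tau|<\theta$), the assertion is false: for $\im\tau<0$ the essential spectrum $\rme^{-2\rmi\im\tau}[0,\infty)$ lies in the upper half-plane. Read pointwise in $z$, openness of the resolvent set only gives it on a $z$-dependent strip around the real $\tau$ axis. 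Nothing in your outline excludes $z$ becoming an eigenvalue of $H(\rmi t)$ for some intermediate $t$. That $H_\theta$ acquires no spectrum in the physical half-plane is the real content of the theorem. Without it the identity theorem does not reach $\tau=\rmi\theta$, so (iii) and (iv), which you derive from the continued identity, are not established.

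The missing idea is to apply the analytic Fredholm alternative in $\tau$ at fixed $z$, and combine it with the covariance you already have.
\begin{itemize}
\item Let $\Omega_z$ be the connected component, containing $\bbR$, of $\{\tau:|\im\tau|<\theta_0,\ z\notin\rme^{-2\tau}[0,\infty)\}$. It contains the strip $\{0\le\im\tau\le\theta\}$.
\item On $\Omega_z$ write $z-H(\tau)=(1-K_z(\tau))(z-\rme^{-2\tau}H_0)$ with $K_z(\tau)=V_\tau(z-\rme^{-2\tau}H_0)^{-1}$. This $K_z$ is compact and analytic in $\tau$.
\item For real $\tau$, $1-K_z(\tau)$ is invertible, because $H(\tau)$ is then unitarily equivalent to the self-adjoint $H$ and $z\notin\bbR$.
\item Hence the exceptional set $E_z=\{\tau\in\Omega_z: z\in\sigma(H(\tau))\}$ is discrete.
\item But $H(\tau+t)=\mathcal U(t)H(\tau)\mathcal U(-t)$ for $t\in\bbR$ makes $E_z$ invariant under real translations, so $E_z=\emptyset$.
\end{itemize}
Only now is $F$ analytic on all of $\Omega_z$. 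It is constant on $\bbR$, hence constant, and your continuation into the wedge follows.

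Two smaller corrections. First, the left vector should be $\mathcal U(\overline{\rmi\theta})\phi=\mathcal U(-\rmi\theta)\phi$. Your $\mathcal U(-\overline{\rmi\theta})\phi=\mathcal U(\rmi\theta)\phi$ is the left vector for the bilinear $c$ product, not for the sesquilinear bracket. Second, the converse in (iii) and the rank comparison in (iv) need density of $\mathcal U(\pm\rmi\theta)\mathcal A$, not merely of $\mathcal A$. Since $\mathcal U(\rmi\theta)$ is unbounded, this does not follow and has to be built into the definition of the dilation-analytic class.
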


For the black-hole problem considered in the main text, the starting
operator on the real $x$ axis is the Schr\"odinger operator with a real
radial potential, while the QNM condition is encoded in the meromorphic
continuation of its outgoing resolvent.  In the spectral variable
$z=\omega^2$, complex scaling rotates the continuous spectrum by
$-2\theta$, while an exposed resonance eigenvalue $E_n=\omega_n^2$
remains independent of $\theta$ as long as it stays in a common
uncovered wedge and the deformation remains within a common analyticity
domain.  This is the sense in which the $\theta$ independence holds in the exact theory.  The residual $\theta$ dependence
observed after basis truncation is therefore a numerical convergence
diagnostic.  Exterior complex scaling implements the same spectral
deformation principle while leaving a finite interior region undeformed.

\section{Derivation of the individual eigenvalues and spectral projectors}
\label{app:individual_projectors}

In this appendix we give a detailed derivation of the individual
eigenvalues and spectral projectors associated with the isolated
rank-two Riesz cluster away from the exceptional point.
The purpose is to make explicit how Eq.~\eqref{eq:individual_projectors_from_cluster}
follows from the cluster quantities
$\vsub{P}{pair}^{\theta}$,
$\vsub{K}{pair}^{\theta}$,
$\vsub{E}{c}$, and $\Delta^2$,
without introducing separately normalized left and right eigenvectors.

For notational simplicity, we suppress the parameter dependence on $q$
and write
\begin{align}
    P &:= \vsub{P}{pair}^{\theta},
    \\
    M &:= \vsub{M}{pair}^{\theta},
    \\
    E_c &:= \vsub{E}{c},
    \\
    K &:= \vsub{K}{pair}^{\theta}.
\end{align}
The relevant two-dimensional invariant root subspace is
\begin{equation}
    \mathcal{X} := \operatorname{im} P.
\end{equation}
Since $P$ is the Riesz projector onto this subspace,
\begin{equation}
    P|_{\mathcal{X}} = I_{\mathcal{X}},
\end{equation}
where $I_{\mathcal{X}}$ denotes the identity operator on
$\mathcal{X}$.  Moreover,
\begin{equation}
    M = HP,
\end{equation}
so that the restriction of $M$ to $\mathcal{X}$ coincides with the
restriction of $H$,
\begin{equation}
    M|_{\mathcal{X}} = H|_{\mathcal{X}}.
\end{equation}
By definition,
\begin{equation}
    K = M - E_c P,
\end{equation}
and therefore
\begin{equation}
    K|_{\mathcal{X}}
    =
    H|_{\mathcal{X}}
    -
    E_c I_{\mathcal{X}}.
    \label{eq:app_K_restricted}
\end{equation}

The spectral center is
\begin{equation}
    E_c
    =
    \frac{1}{2}\Tr M.
\end{equation}
Since $\dim\mathcal{X}=2$, Eq.~\eqref{eq:app_K_restricted} implies
\begin{align}
    \Tr_{\mathcal{X}}
    \left(
        K|_{\mathcal{X}}
    \right)
    &=
    \Tr_{\mathcal{X}}
    \left(
        H|_{\mathcal{X}}
    \right)
    -
    E_c
    \Tr_{\mathcal{X}} I_{\mathcal{X}}
    \\
    &=
    2E_c - 2E_c
    \\
    &=0.
    \label{eq:app_K_traceless}
\end{align}
Thus $K|_{\mathcal{X}}$ is a traceless operator on a
two-dimensional space.

For a two-dimensional operator $B$, the Cayley--Hamilton identity is
\begin{equation}
    B^2
    -
    (\Tr B)B
    +
    (\det B)I
    =
    0.
\end{equation}
Applying this identity to $K|_{\mathcal{X}}$ and using
Eq.~\eqref{eq:app_K_traceless}, one obtains
\begin{equation}
    \left(
        K|_{\mathcal{X}}
    \right)^2
    =
    -
    \det
    \left(
        K|_{\mathcal{X}}
    \right)
    I_{\mathcal{X}}.
\end{equation}
For a two-dimensional operator,
\begin{equation}
    \det B
    =
    \frac{1}{2}
    \left[
        (\Tr B)^2
        -
        \Tr(B^2)
    \right].
\end{equation}
Using again $\Tr_{\mathcal{X}}K=0$ and the definition
\begin{equation}
    \Delta^2
    :=
    \frac{1}{2}\Tr K^2,
\end{equation}
we obtain
\begin{equation}
    K^2
    =
    \Delta^2 P.
    \label{eq:app_K_squared}
\end{equation}
Here and below the identity is understood as an operator identity
embedded in the full space; on $\mathcal{X}$, the projector $P$ acts
as the identity.

\subsection{Eigenvalues of the restricted operator}

Away from the exceptional point,
\begin{equation}
    \Delta^2 \neq 0.
\end{equation}
Choose either local branch of the square root,
\begin{equation}
    \Delta = \sqrt{\Delta^2}.
\end{equation}
Equation~\eqref{eq:app_K_squared} shows that every eigenvalue $\kappa$
of $K|_{\mathcal{X}}$ satisfies
\begin{equation}
    \kappa^2 = \Delta^2.
\end{equation}
Since $K|_{\mathcal{X}}$ is traceless and the subspace is
two-dimensional, its two eigenvalues are therefore
\begin{equation}
    \kappa_{\pm} = \pm \Delta.
\end{equation}

Let $v_{\pm}\in\mathcal{X}$ denote the corresponding eigenvectors,
\begin{equation}
    Kv_{\pm}
    =
    \pm\Delta v_{\pm}.
    \label{eq:app_K_eigenvectors}
\end{equation}
On $\mathcal{X}$ we have
\begin{equation}
    H
    =
    E_c I_{\mathcal{X}}
    +
    K.
\end{equation}
Hence
\begin{align}
    Hv_{\pm}
    &=
    \left(
        E_c I_{\mathcal{X}} + K
    \right)
    v_{\pm}
    \\
    &=
    \left(
        E_c \pm \Delta
    \right)
    v_{\pm}.
\end{align}
The two eigenvalues of the restricted spectral operator are thus
\begin{equation}
    E_{\pm}
    =
    E_c \pm \Delta.
    \label{eq:app_Epm}
\end{equation}

\subsection{Construction of the individual spectral projectors}

Let $P_{+}$ and $P_{-}$ denote the spectral projectors onto the
$E_{+}$ and $E_{-}$ eigenspaces, respectively.  Since the two
eigenspaces exhaust the isolated rank-two cluster,
\begin{equation}
    P_{+}+P_{-}=P.
    \label{eq:app_projector_sum}
\end{equation}
Furthermore, $K$ acts as $+\Delta$ on the range of $P_{+}$ and as
$-\Delta$ on the range of $P_{-}$.  Therefore
\begin{equation}
    K
    =
    \Delta P_{+}
    -
    \Delta P_{-},
\end{equation}
or equivalently,
\begin{equation}
    P_{+}-P_{-}
    =
    \frac{K}{\Delta}.
    \label{eq:app_projector_difference}
\end{equation}
Solving Eqs.~\eqref{eq:app_projector_sum} and
\eqref{eq:app_projector_difference} gives
\begin{equation}
    P_{\pm}
    =
    \frac{1}{2}
    \left(
        P
        \pm
        \frac{K}{\Delta}
    \right).
    \label{eq:app_Ppm}
\end{equation}
Restoring the full notation,
\begin{align}
    E_{\pm}(q)
    &=
    \vsub{E}{c}(q)
    \pm
    \Delta(q),
    \\
    P_{\pm}^{\theta}(q)
    &=
    \frac{1}{2}
    \left[
        \vsub{P}{pair}^{\theta}(q)
        \pm
        \frac{
            \vsub{K}{pair}^{\theta}(q)
        }{
            \Delta(q)
        }
    \right].
\end{align}

It is useful to verify directly that
Eq.~\eqref{eq:app_Ppm} indeed defines spectral projectors.
The Riesz identities imply
\begin{equation}
    P^2=P,
    \qquad
    PK=KP=K.
    \label{eq:app_PK_identities}
\end{equation}
Using Eqs.~\eqref{eq:app_K_squared} and
\eqref{eq:app_PK_identities}, we find
\begin{align}
    P_{\pm}^2
    &=
    \frac{1}{4}
    \left(
        P
        \pm
        \frac{K}{\Delta}
    \right)^2
    \\
    &=
    \frac{1}{4}
    \left(
        P^2
        \pm
        \frac{2PK}{\Delta}
        +
        \frac{K^2}{\Delta^2}
    \right)
    \\
    &=
    \frac{1}{4}
    \left(
        2P
        \pm
        \frac{2K}{\Delta}
    \right)
    \\
    &=
    P_{\pm}.
\end{align}
Similarly,
\begin{align}
    P_{+}P_{-}
    &=
    \frac{1}{4}
    \left(
        P+\frac{K}{\Delta}
    \right)
    \left(
        P-\frac{K}{\Delta}
    \right)
    \\
    &=
    \frac{1}{4}
    \left(
        P-\frac{K^2}{\Delta^2}
    \right)
    \\
    &=0.
\end{align}
Thus the two projectors are idempotent and mutually annihilating.
They are spectral projectors of a non-Hermitian operator and need not
be orthogonal projectors with respect to a Hermitian inner product.

One may also verify the spectral relation directly.  From
Eq.~\eqref{eq:app_K_squared},
\begin{align}
    KP_{\pm}
    &=
    \frac{1}{2}
    \left(
        K
        \pm
        \frac{K^2}{\Delta}
    \right)
    \\
    &=
    \frac{1}{2}
    \left(
        K
        \pm
        \Delta P
    \right)
    \\
    &=
    \pm\Delta P_{\pm}.
    \label{eq:app_KPpm}
\end{align}
Since
\begin{equation}
    HP=E_cP+K,
\end{equation}
Eqs.~\eqref{eq:app_PK_identities} and
\eqref{eq:app_KPpm} give
\begin{align}
    HP_{\pm}
    &=
    \left(
        E_cP+K
    \right)
    P_{\pm}
    \\
    &=
    E_cP_{\pm}
    \pm
    \Delta P_{\pm}
    \\
    &=
    E_{\pm}P_{\pm}.
\end{align}
This confirms directly that $P_{\pm}$ are the spectral projectors
associated with $E_{\pm}$.

\subsection{Equivalent derivation from the pair resolvent}

The same result follows immediately by partial-fraction decomposition
of the exact rank-two pair resolvent.  The cluster formula is
\begin{equation}
    R_{\mathrm{pair}}^{\theta}(z)
    =
    \frac{
        (z-E_c)P+K
    }{
        (z-E_c)^2-\Delta^2
    }.
    \label{eq:app_pair_resolvent}
\end{equation}
Introduce
\begin{equation}
    a:=z-E_c.
\end{equation}
Then
\begin{equation}
    R_{\mathrm{pair}}^{\theta}(z)
    =
    \frac{
        aP+K
    }{
        (a-\Delta)(a+\Delta)
    }.
\end{equation}
The poles therefore occur at
\begin{equation}
    a=\pm\Delta,
\end{equation}
or equivalently at
\begin{equation}
    z=E_c\pm\Delta=E_{\pm}.
\end{equation}

To determine the residues, write
\begin{equation}
    \frac{
        aP+K
    }{
        (a-\Delta)(a+\Delta)
    }
    =
    \frac{A}{a-\Delta}
    +
    \frac{B}{a+\Delta}.
\end{equation}
Multiplication by the denominator gives
\begin{equation}
    aP+K
    =
    A(a+\Delta)
    +
    B(a-\Delta).
\end{equation}
Equating the coefficient of $a$ and the constant term yields
\begin{align}
    A+B &= P,
    \\
    \Delta(A-B) &= K.
\end{align}
Hence
\begin{align}
    A
    &=
    \frac{1}{2}
    \left(
        P+\frac{K}{\Delta}
    \right),
    \\
    B
    &=
    \frac{1}{2}
    \left(
        P-\frac{K}{\Delta}
    \right).
\end{align}
Therefore
\begin{equation}
    R_{\mathrm{pair}}^{\theta}(z)
    =
    \frac{P_{+}}{z-E_{+}}
    +
    \frac{P_{-}}{z-E_{-}},
    \label{eq:app_resolvent_simple_poles}
\end{equation}
with
\begin{equation}
    P_{\pm}
    =
    \frac{1}{2}
    \left(
        P
        \pm
        \frac{K}{\Delta}
    \right).
\end{equation}
Since the residue of the resolvent at an isolated simple eigenvalue is
the corresponding Riesz spectral projector,
Eq.~\eqref{eq:app_resolvent_simple_poles} gives the same
$P_{\pm}$ as the preceding algebraic construction.

\subsection{Choice of square-root branch and exchange of mode labels}

The cluster construction fundamentally determines $\Delta^2$, not a
globally single-valued square root $\Delta$.  Away from the exceptional
point, one may choose either local branch,
\begin{equation}
    \Delta
    =
    \sqrt{\Delta^2}.
\end{equation}
Changing the branch sends
\begin{equation}
    \Delta
    \longmapsto
    -\Delta.
\end{equation}
Equations~\eqref{eq:app_Epm} and \eqref{eq:app_Ppm} then imply
\begin{equation}
    E_{+}
    \longleftrightarrow
    E_{-},
    \qquad
    P_{+}
    \longleftrightarrow
    P_{-}.
\end{equation}
Thus a branch change merely exchanges the labels assigned to the two
individual modes.  By contrast, the cluster quantities
\begin{equation}
    P,
    \qquad
    K,
    \qquad
    E_c,
    \qquad
    \Delta^2
\end{equation}
are unchanged.  This is the sense in which the latter provide
single-valued variables for the isolated rank-two spectral cluster.

\subsection{Exceptional-point limit}

The individual-projector representation also makes explicit why
separately labeled modal projectors become ill-conditioned near the
exceptional point.  From Eq.~\eqref{eq:app_Ppm},
\begin{equation}
    P_{\pm}
    =
    \frac{1}{2}P
    \pm
    \frac{K}{2\Delta}.
    \label{eq:app_projector_singularity}
\end{equation}
As the exceptional point is approached,
\begin{equation}
    \Delta^2\longrightarrow 0.
\end{equation}
At a second-order exceptional point,
\begin{equation}
    K\neq0,
    \qquad
    K^2=0.
\end{equation}
Consequently, the two individual projectors may contain opposite
terms proportional to $1/\Delta$.  Their sum nevertheless remains
finite,
\begin{equation}
    P_{+}+P_{-}=P.
\end{equation}
The apparent singularity is therefore associated with the
decomposition into separately labeled modes rather than with the
rank-two invariant subspace itself.

The same cancellation is transparent at the level of the resolvent.
Away from the exceptional point,
\begin{equation}
    \vsub{R}{pair}^{\theta}(z)
    =
    \frac{P_{+}}{z-E_{+}}
    +
    \frac{P_{-}}{z-E_{-}},
\end{equation}
where the two terms may separately become ill-conditioned as
$\Delta\to0$.  Their sum, however, is exactly the branch-independent
expression in Eq.~\eqref{eq:app_pair_resolvent}.  Taking the
exceptional-point limit gives
\begin{equation}
    \vsub{R}{pair}^{\theta}(z)
    \longrightarrow
    \frac{\varEP{K}}{
        (z-\varEP{E})^2
    }
    +
    \frac{\varEP{P}}{
        z-\varEP{E}
    },
    \label{eq:app_EP_resolvent}
\end{equation}
where
\begin{equation}
    \varEP{E}=E_c(\varEP{q}),
\end{equation}
and
\begin{equation}
    \varEP{K}
    :=
    K(\varEP{q}),
    \qquad
    \varEP{P}
    :=
    P(\varEP{q}).
\end{equation}
Thus the two simple poles reorganize into the simple- plus
double-pole Laurent form of a second-order Jordan block.  The
divergence of the individual projectors in
Eq.~\eqref{eq:app_projector_singularity} is therefore a singularity of
the modal coordinates, whereas the cluster projector $P$, the
nilpotent operator $K$, and the exact pair resolvent remain finite
objects through the coalescence.

\bibliographystyle{utphys}
\bibliography{ref}
\end{document}